\setlist[itemize]{leftmargin=*}
\DeclarePairedDelimiter{\ceil}{\lceil}{\rceil}
\DeclareMathOperator*{\argmin}{arg\,min}
\newtheorem{lemma}{Lemma}
\newtheorem{theorem}{Theorem}
\begin{document}
	\title{VEBO: A Vertex- and Edge-Balanced Ordering Heuristic to Load Balance Parallel Graph Processing}
	\author{\IEEEauthorblockN{Jiawen Sun}
		\IEEEauthorblockA{Queen's University Belfast\\
			Email: jsun03@qub.ac.uk}
		\and
		\IEEEauthorblockN{Hans Vandierendonck}
		\IEEEauthorblockA{Queen's University Belfast\\
			Email: h.vandierendonck@qub.ac.uk}
		\and
		\IEEEauthorblockN{Dimitrios S.\ Nikolopoulos}
		\IEEEauthorblockA{Queen's University Belfast\\
			Email: d.nikolopoulos@qub.ac.uk}}
	\maketitle
	\begin{abstract}
%
%
Graph partitioning drives graph processing
in distributed, disk-based and NUMA-aware systems.
A commonly used partitioning goal is to balance the number
of edges per partition in conjunction with minimizing
the edge or vertex cut.
While this type of partitioning is computationally expensive,
we observe that such topology-driven partitioning
nonetheless results in computational load imbalance.

We propose
Vertex- and Edge-Balanced Ordering (VEBO):
balance the number of edges and the number of unique
destinations of those edges.
VEBO optimally balances edges and vertices for graphs with a power-law
degree distribution.
Experimental evaluation on three
shared-memory graph processing systems (Ligra, Polymer and GraphGrind)
shows that VEBO achieves excellent load balance and improves
performance
by 1.09x over Ligra,
1.41x over Polymer and 1.65x over GraphGrind,
compared to their respective partitioning algorithms,
averaged across 8 algorithms and
7 graphs.


	\end{abstract}
	
	%
	%

	

	\section{Introduction}
	
	Graph partitioning is used extensively to orchestrate parallel execution
	of graph processing
	in distributed systems~\cite{gonzalez2012powergraph},
	disk-based processing~\cite{kyrola2012graphchi,roy2013x} 
	and NUMA-aware shared memory systems~\cite{zhang2015numa,sun:17:ics}.
	In order to maximize processing speed,
	each partition should take the same amount of processing time.
	Moreover, the partitions should be largely independent to
	minimize the volume of data communication.
	It has been demonstrated that partitioning the edge set is more effective
	than partitioning the vertex set~\cite{gonzalez2012powergraph},
	leading to the commonly used heuristic to balance edge counts
	and minimize vertex replication~\cite{gonzalez2012powergraph,bourse:14:bep,li:17:spac,kyrola2012graphchi}.
	These constraints are typically mutually incompatible for
	scale-free graphs, resulting in a compromise between edge balance
	and vertex replication~\cite{bourse:14:bep}.
	
	%
	We have observed, however, that edge balance
	does not uniquely determine execution time.
	For instance, for an approximating PageRankDelta~\cite{Ligra} computation,
	during a first phase of the algorithm,
	about half of low-degree vertices converge before
	any high-degree vertex converges.
	A partition that consists of mostly high-degree vertices will thus take
	longer to process than a partition with only low-degree vertices, resulting
	in load imbalance.
	Note that it is likely to encounter partitions with mostly low-degree
	vertices
	in graphs with a power-law degree distribution as these
	graphs have many more low-degree vertices than high-degree vertices.
	
	The key contribution of this paper is to identify that
	the time for processing a graph partition depends on both
	the number of edges and the number of unique destinations in that partition.
	This presents a new heuristic to partition graphs through
	joint destination vertex- and edge-balanced partitioning,
	which we call VEBO. 
	
	
	A key motivation for considering joint vertex and edge balancing
	is provided by the classification of Sun~\emph{et al},
	who observed a distinction between \emph{edge-oriented} algorithms
	and \emph{vertex-oriented} algorithms~\cite{sun:17:ics}.
	Edge-oriented algorithms, like PageRank, perform an amount of computation
	proportional to the number of edges.
	In contrast,
	vertex-oriented algorithms, like Breadth-First Search, perform an
	amount of computation proportional to the number of vertices.
	These properties strongly affect partitioning:
	a vertex-balanced partition can result in
	almost a 40\% speedup compared to edge-balanced partitioning
	for vertex-oriented algorithms~\cite{sun:17:ics}.
	While Sun~\emph{et al} selected the partitioning heuristic depending on the
	algorithm, VEBO seamlessly resolves
	this important distinction between algorithm types.
	
	
	A second important contribution of this work is to identify a need
	to adapt vertex order to the characteristics of the graph processing
	system. Each system has unique design choices, which determine
	its key performance bottlenecks.
	For instance, Ligra~\cite{Ligra} uses dynamic scheduling to manage
	parallelism, but does itself not improve memory locality.
	In contrast, Polymer~\cite{zhang2015numa} and GraphGrind~\cite{sun:17:ics}
	perform NUMA-aware data layout and
	use static scheduling in order to bind computation to the appropriate
	NUMA domains. Static scheduling makes parallel loops sensitive to
	load balance as the execution time of the loop is determined
	by the last-completing thread.
	It may thus be expected that vertex ordering
	serves different purposes: for Ligra, memory locality should be
	improved, while for Polymer and GraphGrind, load balance is more important.
	
	
	
	We propose a graph partitioning algorithm that calculates an optimally
	load-balanced
	partition for power-law graphs with time complexity $\mathcal{O}(n\log P)$,
	where $n$ is the number of vertices in the graph
	and $P$ is the number of partitions.
	Extensive experimental evaluation using three shared memory graph processing
	systems demonstrates a near-perfect computational load balance across a variety
	of graph data sets and algorithms.
	Contrary to heuristics such as
	Reverse Cuthill-McKee (RCM)~\cite{george:94:sparse} and Gorder~\cite{wei:16:gorder}
	that aim to optimize memory locality,
	we obtain a consistent performance improvement when processing seven
	scale-free graphs.
	
	In summary, this paper makes the following contributions:
	\begin{enumerate}
		\item Demonstrating the need to balance the number of unique destinations along with the number of edges in order to achieve computational load balance
		\item A simple vertex reordering algorithm that optimally balances both edges and unique destinations using time proportional to $\mathcal{O}(n\log P)$
		\item Addressing vertex-oriented and edge-oriented algorithms
		using a single graph partitioning heuristic
		\item Extensive experimental evaluation
		using three shared memory graph processing systems (Ligra, Polymer
		and GraphGrind) and a comparison
		to edge balancing
	\end{enumerate}
	
	The remainder of this paper is organized as follows:
	Section~\ref{sec:motiv} motivates the load balancing heuristic.
	Section~\ref{sec:reorder} presents the VEBO algorithm and proves its optimality.
	Section~\ref{sec:evalm} presents our experimental evaluation methodology.
	The experimental evaluation of VEBO is presented in Section~\ref{sec:eval}.
	Related work is discussed in Section~\ref{sec:rela}.
	\begin{algorithm}[!t]
		\Input{Graph $G=(V,E)$; number of partitions $P$}
		\Output{Graph partitions $G_i=(V,E[i])$ for $i=0,\ldots,P-1$}
		
		Let avg = $|E|/P$\tcp*{target edges per partition}
		Let E[P] = \{ 0 \}\tcp*{array of length $P$, initialized to 0}
		Let i = 0\;
		\For{$t\leftarrow 0$ \KwTo $|V|-1$}{
			Let $v = v_{t}$\tcp*{the $t$-th vertex}
			\uIf{$|E[i]|$ $\ge$ avg \textbf{and} i $< P-1$}{
				++i\tcp*{E[i] has exceeded target edges}
			}
			E[i] = E[i] $\ \cup$ in-edges($v$)\tcp*{i is home partition of v}
		}
		\caption{Locality-preserving edge-balanced partitioning of the destination vertices}
		\label{algo:part}
	\end{algorithm}
	
	\begin{figure}[t]
		\centering	
		\subfloat[Twitter]{
			\includegraphics[width=.48\columnwidth,clip]{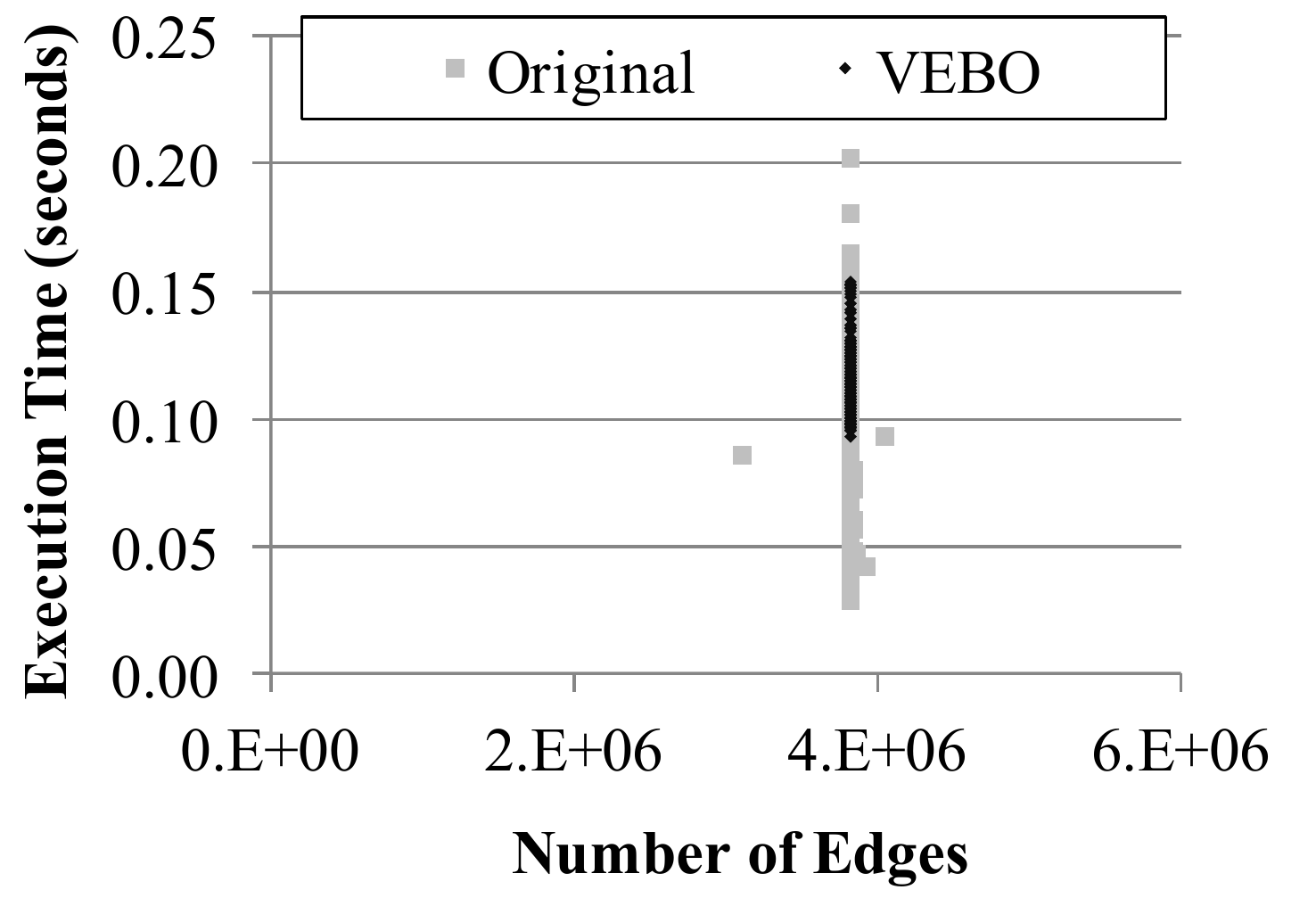}
			\label{fig:twitter-edge}
		}
		\subfloat[Friendster]{
			\includegraphics[width=.49\columnwidth,clip]{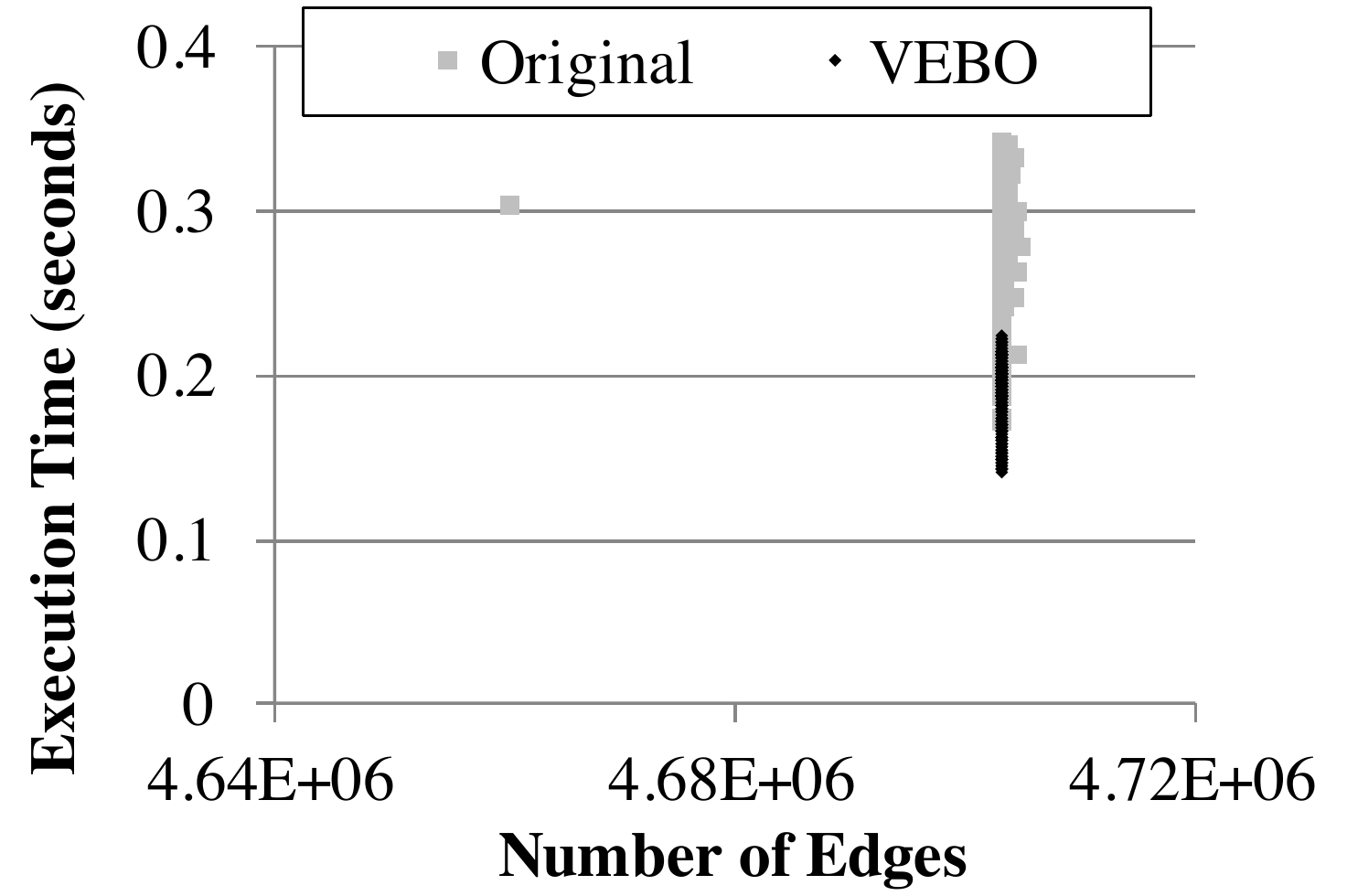}
			\label{fig:fd-edge}
		}
		
		\subfloat[Twitter]{
			\includegraphics[width=.49\columnwidth,clip]{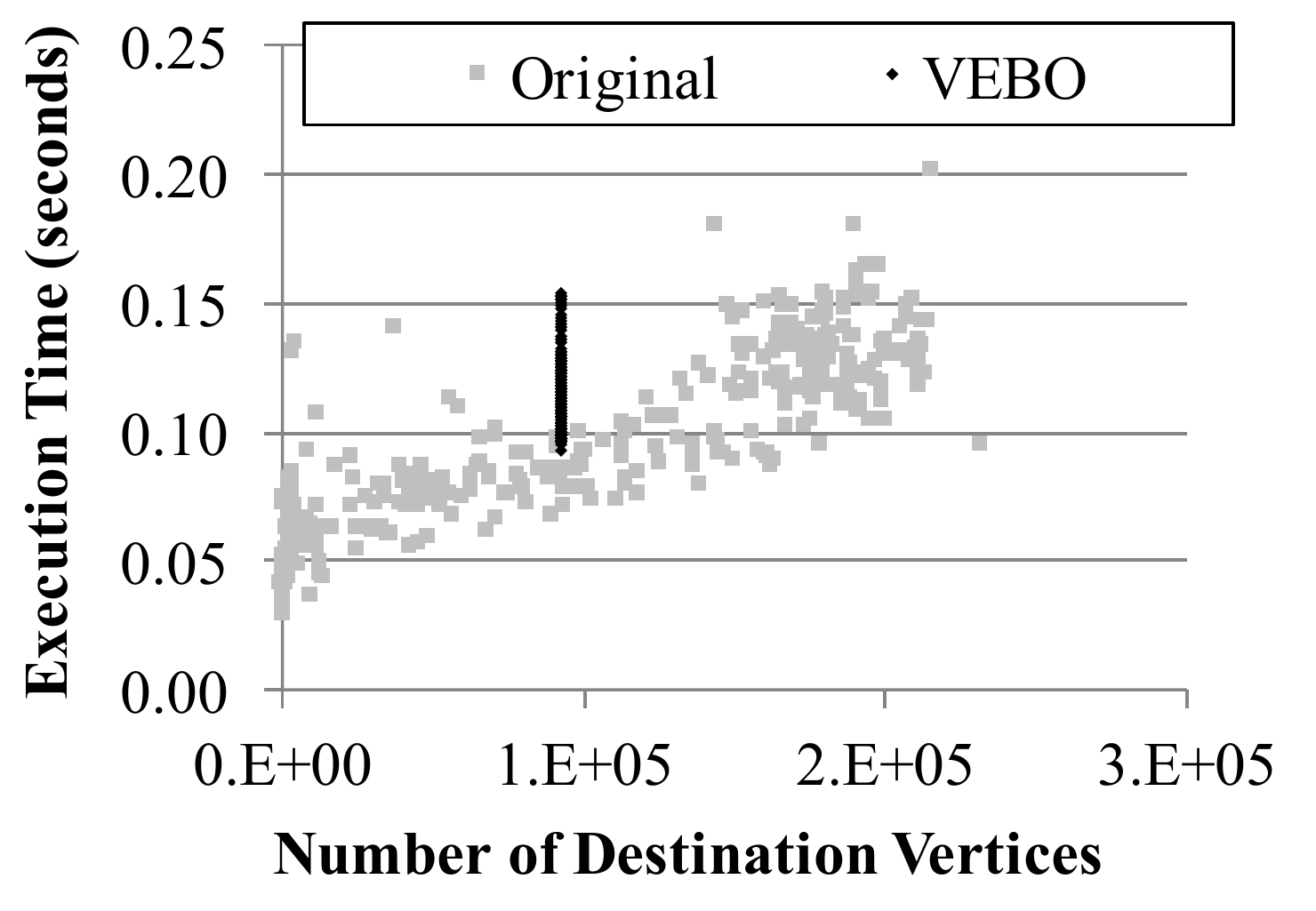}
			\label{fig:twitter-dest}
		}
		\subfloat[Friendster]{
			\includegraphics[width=.49\columnwidth,clip]{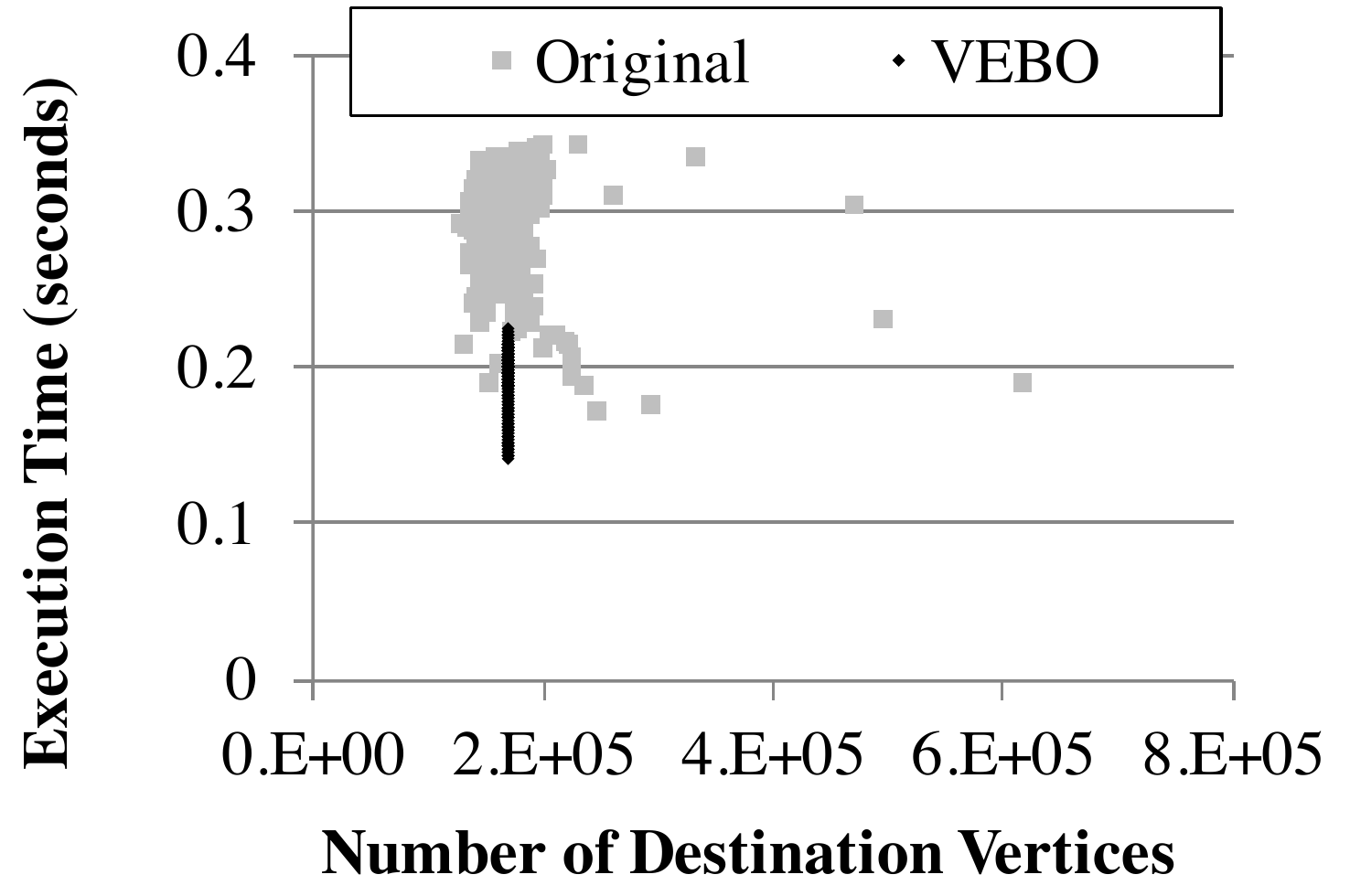}
			\label{fig:tfd-dest}
		}
		
		\subfloat[Twitter]{
			\includegraphics[width=.49\columnwidth,clip]{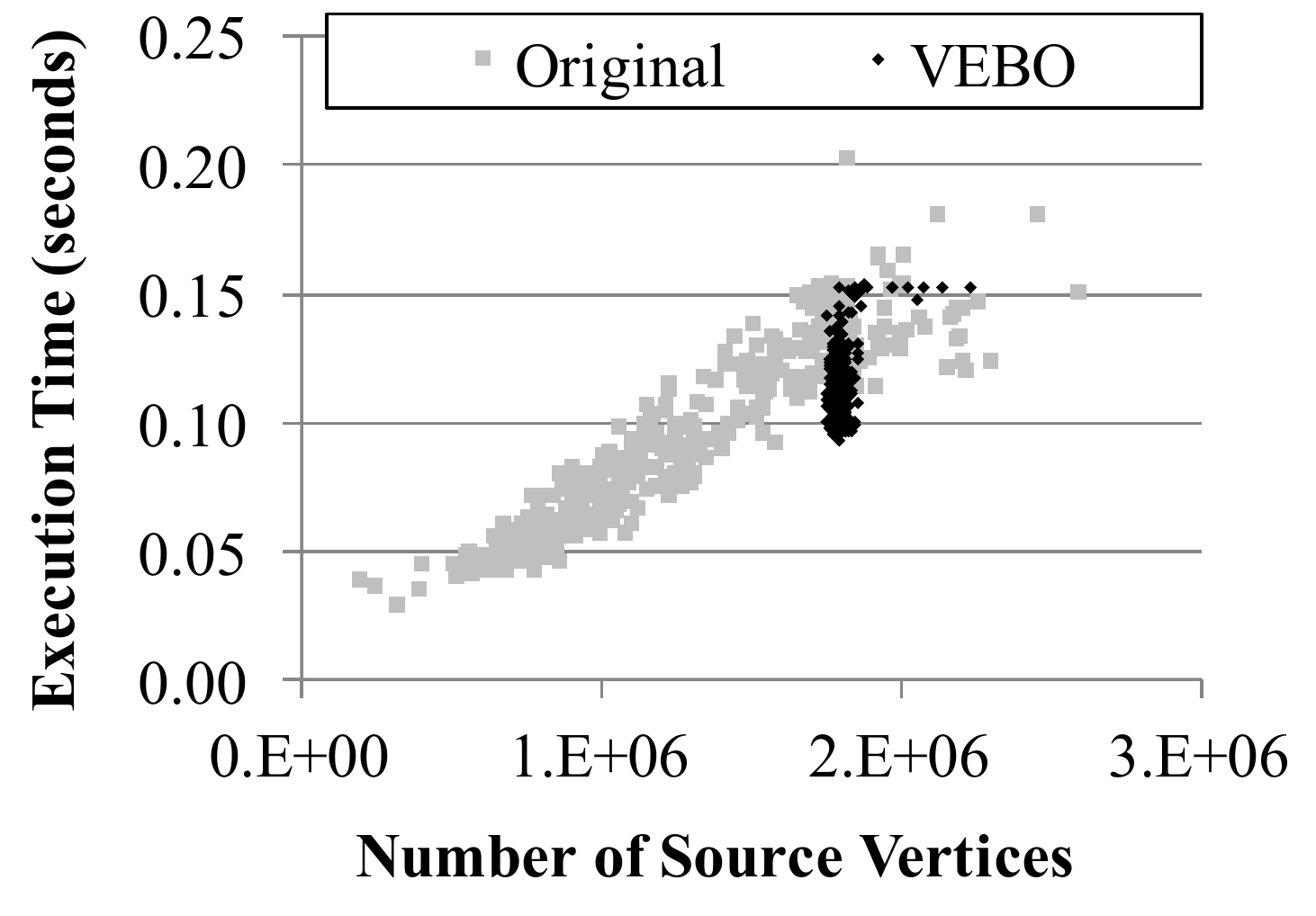}
			\label{fig:ftwitter-src}
		}
		\subfloat[Friendster]{
			\includegraphics[width=.49\columnwidth,clip]{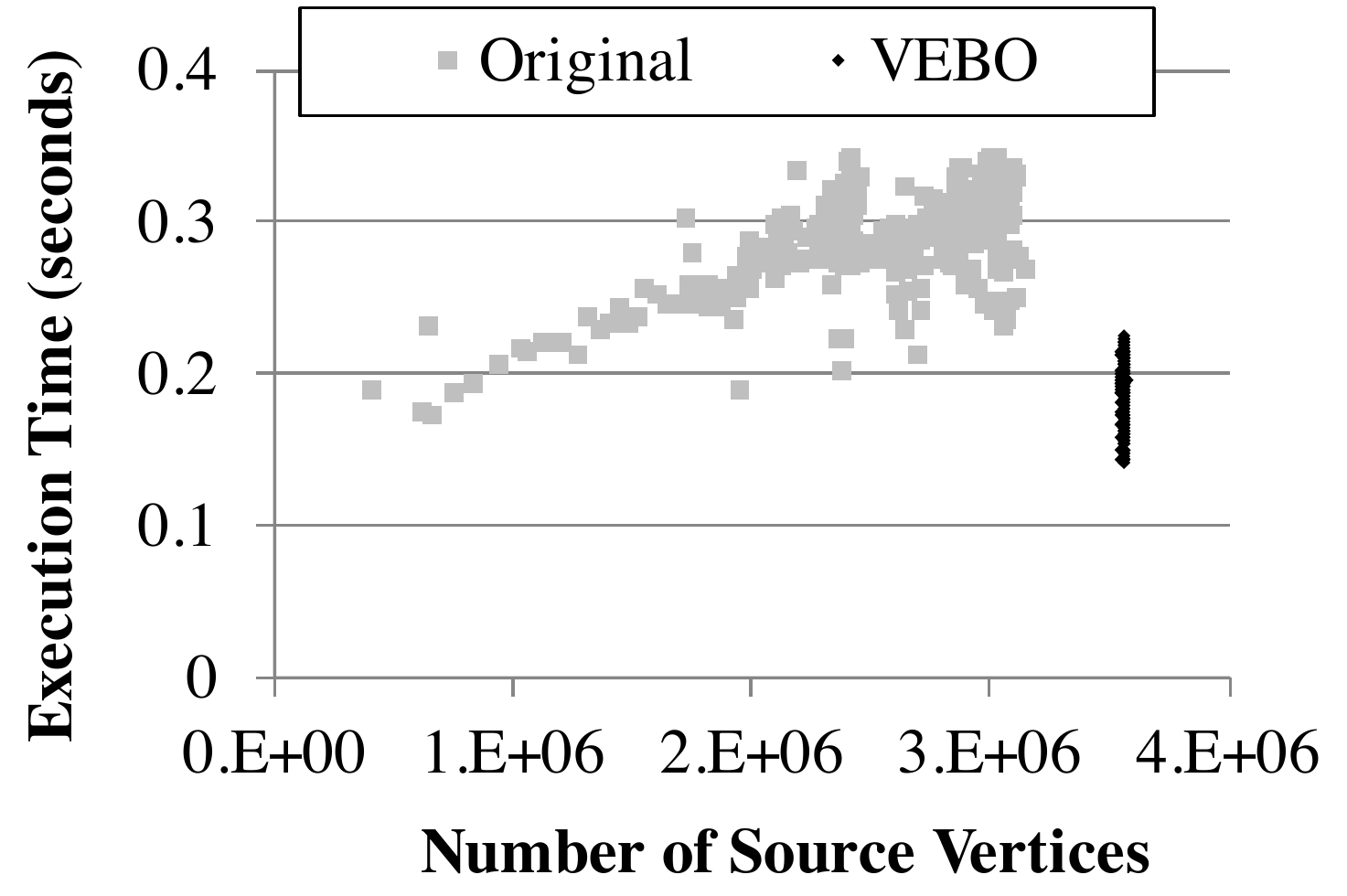}
			\label{fig:fd-src}
		}
		\caption{Processing time of a partition as a function of the number of edges, destinations, and source vertices in the partition. Each data point corresponds to one of 384 partitions.  Average time of Twitter in original is 0.119s and in VEBO is 0.111s. Average time of Friendster in original is 0.289s and in VEBO is 0.234s. }
		\vspace*{-5mm}
		\label{fig:motiv}
	\end{figure}	
	\section{Motivation}
	\label{sec:motiv}
	The edge-balance heuristic is commonly used
	to balance the computation.
	Algorithm~\ref{algo:part} shows a simple, locality-preserving
	edge balancing algorithm for graph partitioning.
	We call it \emph{partitioning by destination} as edges are assigned to the
	partition that holds their destination vertex.
	The algorithm is locality-preserving in the sense that
	each partition consists of a chunk of consecutively numbered vertices.
	This algorithm is used in disk-based~\cite{kyrola2012graphchi}
	and NUMA-aware graph processing~\cite{zhang2015numa,sun:17:ics}.
	
	Figure~\ref{fig:motiv} shows the processing time for each of 384 partitions
	when executing one iteration of the PageRank algorithm.
	The graph is represented using the coordinate format (COO)
	and edges are sorted in the access order of a Hilbert space filling curve
	in order to improve memory locality~\cite{murray:13:naiad,sun:17:icpp}.
	Each partition is processed sequentially by one thread.
	We study the Twitter and Friendster graphs.
	Details on the graphs and experimental setup are
	provided in Section~\ref{sec:evalm}.
	
	The top two plots in Figure~\ref{fig:motiv} show that Algorithm~\ref{algo:part}
	achieves good edge balance. There is some variation on the number
	of edges in each partition, which results from high-degree vertices
	that appear at the boundary between two partitions.
	Placing a high-degree vertex in
	a first partition will overload it, placing it in the next
	partition will leave the first underloaded.
	While partitions are edge-balanced well, the execution time per partition
	varies over a factor of 6.9x and 2x for the
	Twitter and Friendster graphs, respectively.
	The VEBO heuristic, which we will present in the next Section,
	reduces to variation to 1.6x (Twitter) and 1.4x (Friendster).
	
	
	The plots moreover show that the processing time of a partition
	is correlated to the number 
	of destination vertices (middle row), and of source vertices (bottom).
	Partitions with few destination vertices
	(and thus holding vertices with a high in-degree) are processed faster
	than a partition holding many low-degree vertices.
	
	
	While both the number of unique source and destination vertices
	in a partition affects processing time, we choose to balance the
	number of destinations. Balancing both source and destination numbers
	would be as computationally complex as minimizing edge cut and it
	is our express goal to minimize the time taken for partitioning.
	A number of graph processing systems
	partition the destination set
	to create parallelism and
	avoid data races~\cite{kyrola2012graphchi,zhang2015numa,sun:17:ics}.
	Partitioning by destination ensures data race freedom as graph algorithms
	typically update values associated to the destinations of edges.
	Others partition the source vertices~\cite{roy2013x}. 
	For these systems, the analogous balancing criteria would focus
	on the source vertices.
	
	\section{The VEBO Algorithm}
	\label{sec:reorder}
	\subsection{Problem Statement}
	Assume a graph $G=(V,E)$ with power-law in-degree distribution.
	Let $N$ be one more than the highest in-degree in the graph.
	Let $n$ be the number of vertices and let $m$ be the number of
	vertices with non-zero in-degree.
	
	We model the in-degree distribution using a Zipf distribution
	where 
	$s\ge 0$
	is the real-valued exponent governing the skewedness
	of the degree distribution,\footnote{The exponent $s$ is related to the
		exponent $\alpha$ in the power-law degree distribution $p_k=\beta k^{-\alpha}$
		by $\alpha=1+1/s$.}
	$N$ is the number of ranks
	and $p_k$, $k=1,\ldots,N$ is the probability that a vertex has degree $k-1$:
	\begin{equation}
	p_k = \frac{k^{-s}}{H_{N,s}}
	\end{equation}
	\begin{figure}[t]
		\centerline{\includegraphics[width=.95\columnwidth]{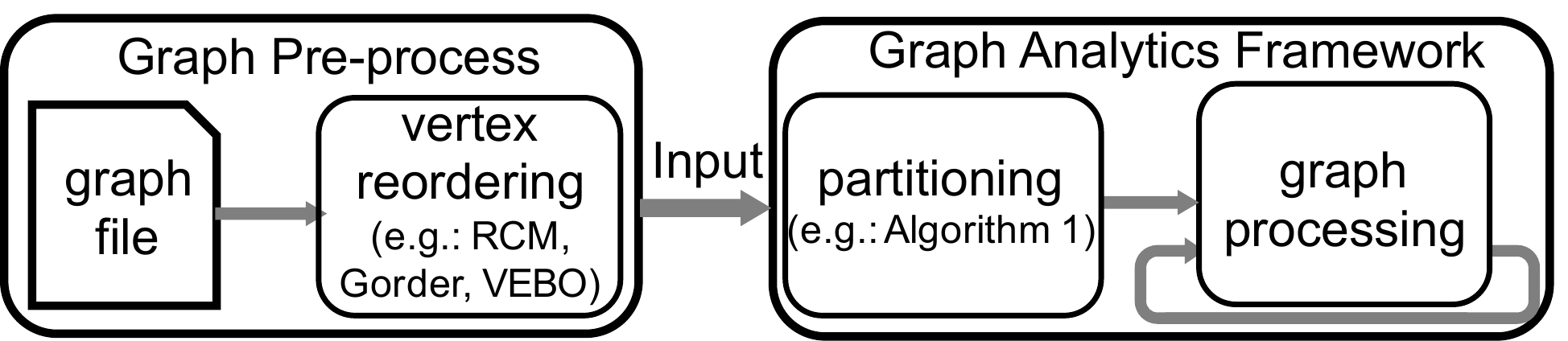}}
		\caption{A graph processing pipeline with vertex reordering 
			and graph partitioning}
		\label{fig:steps}
	\end{figure}
	where $H_{N,s}=\sum_{i=1}^N i^{-s}$ is a Generalized Harmonic Number.
	As such, the most frequent in-degree in the graph is zero,
	and least frequent in-degree is $N-1$.
	We make no assumptions about the out-degree distribution.
	
	VEBO partitions the vertex set in $P$ parts
	such that $V=\cup_{i=0}^{P-1} V_i$ and $V_i\cap V_j=\{\}$ if $i\ne j$.
	The partitions of the vertex set induce a partitioning of the edge
	set $E=\cup_{i=0}^{P-1} E_i$ 
	such that for each $(v,w)\in E$: $(v,w)\in E_i$ if $w\in V_i$.
	The graph partitions are $G_i=(V,E_i)$, where any vertex can
	appear as a source vertex (hence the vertex set is $V$)
	but the set of destinations is restricted to $V_i$.
	The VEBO optimization criteria are:
	\begin{itemize}
		\item minimize $max_{i=0}^{P-1} |E_i| - min_{i=0}^{P-1} |E_i|$ (\emph{edge balance})
		\item minimize $max_{i=0}^{P-1} |V_i| - min_{i=0}^{P-1} |V_i|$ (\emph{vertex balance}).
	\end{itemize}
	These criteria address the worst-case spread of load per partition.
	Alternatively, criteria based on variation could be formulated which could
	potentially assess load imbalance with more precision. However,
	we will demonstrate that the worst-case spread of load is limited
	to 1 edge and 1 vertex. As such, these criteria are appropriate.
	
	
	\begin{algorithm}[!t]
		\Input{Graph $G=(V,E)$; number of partitions $P$}
		\Output{Reordered sequence numbers $S[v]\in 0,\ldots,|V|-1$ for $v\in V$
			and partition end points $u[p]\in 0,\ldots,|V|-1$ for $p\in 0,\ldots,P-1$}
		
		Let w[P] = \{ 0 \}\tcp*{tracking edge count in each partition}
		Let u[P] = \{ 0 \}\tcp*{tracking vertex count in each partition}
		Let a[V] = \{ 0 \}\tcp*{assigned partition for each vertex}
		Consider the list $v_{(0)},v_{(1)},\ldots,v_{(n)}$ of vertices sorted by decreasing in-degree, i.e., $\{v_{(0)},v_{(1)},\ldots,v_{(n)}\}=V$ and
		$deg_{in}(v_{(i)})\ge deg_{in}(v_{(j)})$ when $i>j$\;
		Let n = $|V|$\tcp*{the number of vertices}
		Let m = $|\{v\in V: deg_{in}(v)>0\}|$\tcp*{the number of vertices}
		\tcp*{with non-zero degree}
		
		\tcp{Phase 1. Assign vertices with non-zero degree}
		\For{$t\leftarrow 0$ \KwTo $m-1$}{
			Let $v = v_{(t)}$\;
			Let $p = \argmin_{i=0,\ldots,P-1} w[i]$\;
			Let $a[v] = p$\tcp*{assign $v$ to partition $p$}
			Increase $w[p]$ by $deg_{in}(v)$\tcp*{update edge count}
			Increase $u[p]$ by 1\tcp*{update vertex count}
		}
		
		\tcp{Phase 2. Assign vertices with zero degree}
		\For{$t\leftarrow m$ \KwTo $n-1$}{
			Let $v = v_{(t)}$\;
			Let $p = \argmin_{i=0,\ldots,P-1} u[i]$\;
			Let $a[v] = p$\tcp*{assign $v$ to partition $p$}
			Increase $u[p]$ by 1\tcp*{update vertex count}
		}
		\tcp{Phase 3. Calculate new sequence numbers}
		Let $s[0] = 0$\;
		\For{$p\leftarrow 1$ \KwTo $P-1$}{
			Let $s[p] = s[p-1] + u[p-1]$\;
		}
		\For{$t\leftarrow 0$ \KwTo $n-1$}{
			Let $v = v_{(t)}$\;
			Let $S[v] = s[a[v]]$\tcp*{determine sequence number for $v$}
			Increment $s[a[v]]$ by 1\;
		}
		\caption{The VEBO reordering algorithm}
		\label{algo:re}
	\end{algorithm}
	
	\subsection{Algorithm Description}
	The core idea behind VEBO is to perform \emph{vertex reordering}:
	each vertex is assigned a new sequence number in the range $0,\cdots,n-1$
	in a way that enables
	Algorithm~\ref{algo:part} to generate optimal load balance.
	As such, vertex reordering precedes partitioning (Figure~\ref{fig:steps}).
	We follow an approach similar to the multi-processor job scheduling
	heuristic~\cite{graham:69:mp}:
	place a set of objects in order of decreasing size,
	for each object selecting the least-loaded partition.
	In our case, however, we 
	adapt the algorithm to balance both the number
	of objects (vertices) and their size (degree).
	
	The VEBO algorithm (Algorithm~\ref{algo:re}) consists of three
	phases:
	In the first phase, VEBO places vertices with non-zero
	in-degree
	in order of decreasing in-degree.\footnote{From here on, we will refer to in-degree as ``degree'' for brevity.}
	This achieves a near-equal edge count in each partition. We will show that
	edge imbalance is just 1 edge when the size of the placed objects
	follows a power-law distribution.
	In the second phase, zero-degree vertices are placed.
        We observe that real-world graphs may have many zero-degree vertices.
	These vertices do not affect edge balance.
	If any vertex imbalance is introduced during the first phase,
	the vertex imbalance
	is corrected by placement of the zero-degree vertices.
	The third phase reorders the vertices. It assigns new sequence numbers
	to the vertices such that each partition consists of a continuous
	sequence of vertices. This is important to retain spatial locality
	and NUMA locality
	during graph processing~\cite{kyrola2012graphchi,zhang2015numa,sun:17:icpp}.
	
	\subsection{Example}
	Figure~\ref{fig:vebo_example} shows a short example.
	VEBO first sorts the vertices by decreasing in-degree.
	Second, it assigns vertices one by one to the partition
	that has the fewest incoming
	edges among all of the vertices already assigned to it.
	When all vertices have been placed, the vertices are assigned new IDs
	such that each partition spans a range of consecutive vertex IDs.
	This actual reordering is beneficial for spatial locality.
	Finally, a new graph representation is generated using the new vertex IDs.
	Figure~\ref{fig:vebo_example}b) shows how the graph is partitioned.
	Each partition has 7 incoming edges and 3 destination vertices. 
	
	\begin{figure}[t!]
		\centering
		\includegraphics[width=\columnwidth,clip]{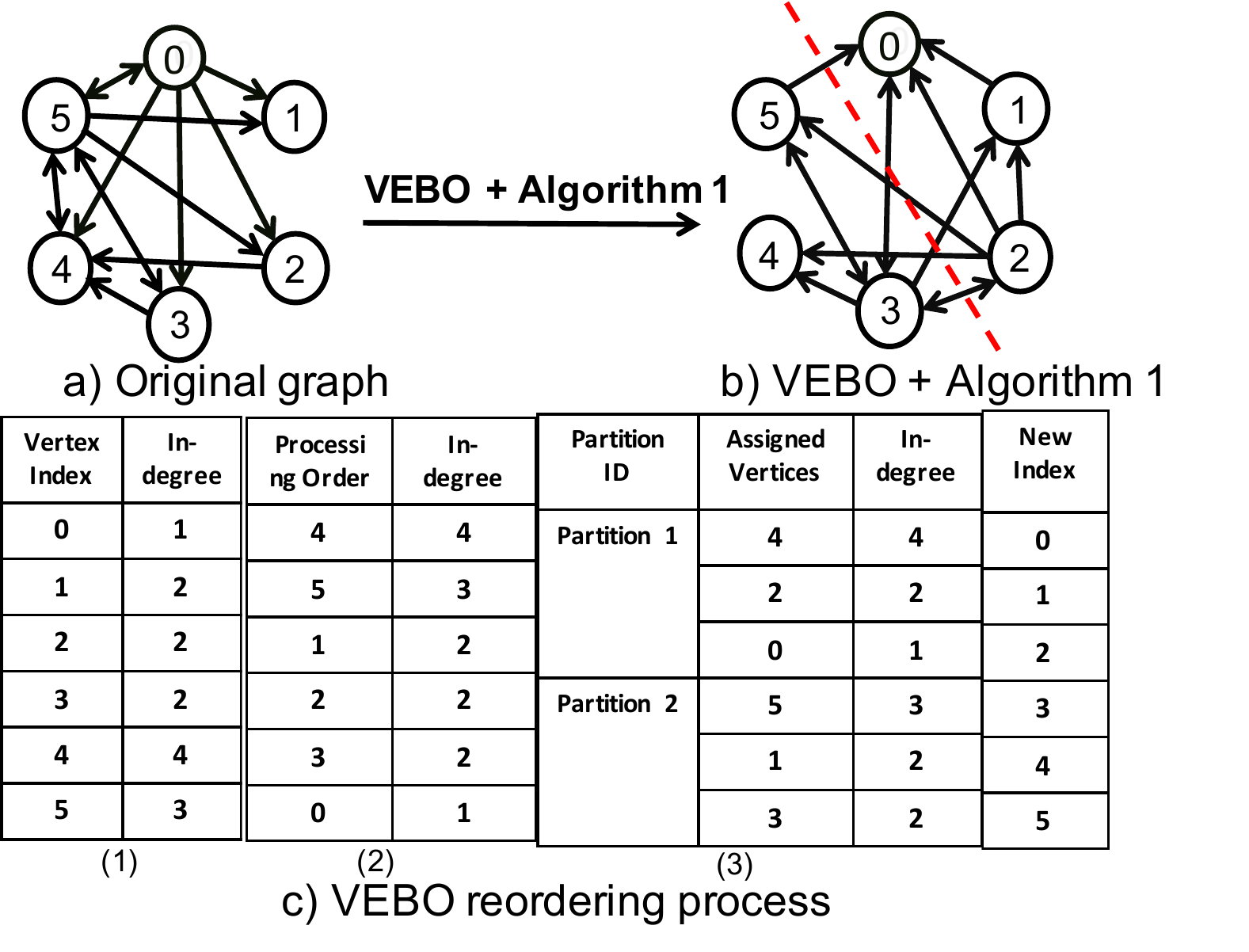}
		\caption{Illustration of VEBO on a 6-vertex graph}
		\label{fig:vebo_example}
		\vspace*{-3mm}
	\end{figure}
	
	\subsection{Analysis}
	To support analysis of the algorithm, we introduce some auxiliary definitions.
	Let $v_{(0)},v_{(1)},\ldots,v_{(n-1)}$ be the sequence that holds all
	vertices in $V$ in order of decreasing degree.
	Assume the algorithm goes through $n=|V|$ steps to place the vertices.
	Step $t>0$ corresponds to placing vertex $v_{(t)}$.
	The algorithm terminates when $t=|V|$.
	
	Let $w_p(t)$ be the number of edges assigned to partition $p$
	before step $t$, i.e., before placing vertex $v_{(t)}$.
	The initial situation is $w_p(0)=0$ for all $p$.
	%
	The maximum weight before step $t$ is:
	$\omega(t)=\max_{j=1,\ldots,P} w_j(t)$
	%
	The minimum weight before step $t$ is:
	$\mu(t)=\min_{j=1,\ldots,P} w_j(t)$
	%
	The edge imbalance before step $t$ is:
	$\Delta(t)=\omega(t)-\mu(t)$
	An optimal edge placement is achieved when $\Delta(n)\le 1$, i.e.,
	the number of edges in each partition differs by at most 1.
	$\Delta(n)=0$ can occur only when the number of partitions divides
	into the number of edges.
	%
	The vertex imbalance before step $t$ is:
	$\delta(t) = \max_{j=1,\ldots,P} u_j(t) - \min_{j=1,\ldots,P} u_j(t)$
	where $u_p(t)$ is the number of vertices assigned to partition $p$
	before step $t$.
	
	We first prove the following Lemma that bounds the edge imbalance
	throughout the placement of vertices.
	As a short-hand, let $d_{(t)}=deg_{in}(v_{(t)})$ for $t=0,\ldots,n-1$.
	Clearly, $d_{(i)}\ge d_{(j)}$ if $i<j$ and $d_{(0)}=N-1$.
	\begin{lemma}
		\label{lm:1}
		When placing a vertex $v_{(t)}$ with degree $d_{(t)}$ when
		the edge weight is $w_j(t)$ for $j\in 1,\ldots,P$,
		one of following cases can occur:
		\begin{eqnarray}
		\label{case1}
		\left.
		\begin{array}{ll}
		\Delta(t+1) \le \Delta(t)\\
		\omega(t+1) = \omega(t)
		\end{array}
		\right\} & \mbox{if} & d_{(t)} \le \Delta(t) \\
		\label{case2}
		\left.
		\begin{array}{ll}
		\Delta(t+1) \le d_{(t)} \\
		\omega(t+1) > \omega(t)
		\end{array}
		\right\} & \mbox{if} & d_{(t)} > \Delta(t)
		\label{lm:1:2}
		\end{eqnarray}
	\end{lemma}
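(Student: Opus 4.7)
The core observation is simply that the greedy rule of Algorithm~\ref{algo:re} assigns $v_{(t)}$ to a partition $p$ achieving $w_p(t)=\mu(t)$, so after placement $w_p(t+1)=\mu(t)+d_{(t)}$ while $w_j(t+1)=w_j(t)$ for $j\neq p$. From this I immediately read off that
\[
\omega(t+1)=\max\bigl(\omega(t),\, \mu(t)+d_{(t)}\bigr)
\]
and that $\mu(t+1)\ge \mu(t)$, because every partition not equal to $p$ keeps its old weight (each $\ge \mu(t)$), while $p$'s new weight $\mu(t)+d_{(t)}$ is also $\ge \mu(t)$. These two identities drive the whole proof; the case split in the statement is then nothing but the question of which argument wins in the $\max$.

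In Case~1, the hypothesis $d_{(t)}\le \Delta(t)$ rewrites as $\mu(t)+d_{(t)}\le \mu(t)+\omega(t)-\mu(t)=\omega(t)$, so the $\max$ is attained by the first term and $\omega(t+1)=\omega(t)$, establishing the second claim. Combined with $\mu(t+1)\ge\mu(t)$ this gives $\Delta(t+1)=\omega(t+1)-\mu(t+1)\le \omega(t)-\mu(t)=\Delta(t)$. In Case~2, $d_{(t)}>\Delta(t)$ instead gives $\mu(t)+d_{(t)}>\omega(t)$, so the second argument wins strictly, yielding $\omega(t+1)=\mu(t)+d_{(t)}>\omega(t)$. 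Then $\Delta(t+1)=\omega(t+1)-\mu(t+1)\le (\mu(t)+d_{(t)})-\mu(t)=d_{(t)}$, as required.

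There is really no hard step: the only point requiring a little care is justifying $\mu(t+1)\ge \mu(t)$, which rests on the fact that the \emph{only} partition whose weight changes at step $t$ is the one already at the minimum, so the new value $\mu(t)+d_{(t)}\ge \mu(t)$ cannot pull the minimum down; all other partitions remain at weights $\ge\mu(t)$ by definition. Once this is made explicit, the case analysis is a one-line algebraic consequence of the greedy choice rule.
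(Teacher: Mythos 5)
Your proof is correct and follows essentially the same route as the paper's: place $v_{(t)}$ on a minimum-weight partition, note that only that partition's weight changes and the minimum cannot decrease, and split on whether $\mu(t)+d_{(t)}$ exceeds $\omega(t)$, i.e., on $d_{(t)}$ versus $\Delta(t)$. Your explicit identity $\omega(t+1)=\max\bigl(\omega(t),\,\mu(t)+d_{(t)}\bigr)$ together with $\mu(t+1)\ge\mu(t)$ is exactly the content of the paper's (sketched) argument, just stated a bit more compactly.
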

	\begin{proof}
		We sketch the proof for brevity.
		Vertex $v_{(t)}$ is placed on the partition $p$
		with minimal $w_p(t)$, i.e., $w_p(t)=\mu(t)$.
		The edge count of $p$ increases to $\mu(t)+d_{(t)}$.
		Two cases arise depending on whether
		the maximum load on a parition is raised
		($\mu(t)+d_{(t)}>\omega(t)$ and thus $d_{(t)}>\Delta(t)$) or not
		($d_{(t)}\le\Delta(t)$). The Lemma follows by elaboration of the
		definitions of $\Delta(t)$ and $\omega(t)$.
	\end{proof}
	Intuitively, by placing more edges we either strive towards balancing
	the edge counts (case~\ref{case1}),
	or we are so close to load balance that placing the next vertex must
	increase the load imbalance (case~\ref{case2}).
	Importantly, in the latter case, the load imbalance is bounded
	by the degree of the last vertex placed.
	As we process vertices in order of decreasing degree, the edge imbalance
	reduces throughout the algorithm.
	
	\begin{theorem}[Edge balance]
		\label{th:1}
		Assume a graph $G=(V,E)$ and a number of partitions $P$.
		Let $n=|V|$ be the number of vertices
		and let $m$ be the number of vertices with non-zero degree.
		Assume that the degree distribution of the graph follows
		a Zipf distribution with $N$ distinct ranks and scale factor $s>0$. 
		Let $|E|$ be the number of edges.
		Assume that $P$ is constrained by $|E|\ge N\:(P-1)$ and that $P<N$.
		Then, on completion of VEBO (Algorithm~\ref{algo:re}),
		\begin{equation}
		\Delta(n)\le 1
		\end{equation}
	\end{theorem}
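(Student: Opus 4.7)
The plan is to chain Lemma~1 through the non-increasing degree sequence of Phase~1, and then to close any residual imbalance using the long tail of degree-1 vertices that the Zipf distribution supplies. Throughout, I would track $\Delta(t)$ as an integer that can only grow during case-2 steps of Lemma~1 and can only shrink (or stay equal) during case-1 steps, so the analysis reduces to controlling the size of the case-2 jumps and guaranteeing enough subsequent case-1 steps to absorb them.

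First, I would iterate Lemma~1 to bound $\Delta(t)$ by the degree of the most recent vertex that triggered case~2. Since $d_{(t)}$ is non-increasing in $t$, this derived bound is itself non-increasing except at case-2 events, so once VEBO moves past a given degree regime the imbalance cannot rebound above it. In particular, by the time Phase~1 has finished all vertices of degree at least~$2$, $\Delta$ is bounded by a small constant, namely the degree of the last case-2 event among those vertices; call this bound $\Delta_0$.

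Second, I would study the run of degree-1 placements directly, since case~1 of Lemma~1 alone only yields $\Delta(t+1)\le\Delta(t)$ and is too weak. A degree-1 vertex is appended to a minimum-weight partition: if the minimum is tied across several partitions, the tie shrinks but $\Delta$ is unchanged; if the minimum is unique, $\mu$ strictly increases and $\Delta$ drops by~$1$. Consequently, every block of at most $P$ consecutive degree-1 placements either finds $\Delta\le 1$ already or reduces $\Delta$ by one, so at most $\Delta_0\,P$ such placements drive the imbalance down to~$1$.

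The main obstacle, and the point where the distributional hypothesis enters, is showing that Phase~1 actually contains enough degree-1 vertices to complete this smoothing. Using $p_2=2^{-s}/H_{N,s}$ together with the Zipf identity $|E|=n\sum_{k=1}^{N}(k-1)\,k^{-s}/H_{N,s}$, I would argue that the hypothesis $|E|\ge N(P-1)$ forces the count of degree-1 vertices, $n\,p_2$, to exceed the crude worst-case requirement of order $N P$ placements for every $s>0$; the side condition $P<N$ rules out degenerate cases in which the degree distribution collapses onto the highest ranks. Once this counting step is established, Phase~1 terminates with $\Delta\le 1$, and Phase~2 places only zero-degree vertices so leaves $\Delta$ unchanged, yielding $\Delta(n)\le 1$.
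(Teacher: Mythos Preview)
Your decomposition into ``bound $\Delta$ after the degree-$\ge 2$ vertices, then mop up with degree-$1$ vertices'' has a counting gap at the last step. The only bound you have on $\Delta_0$ is the degree of the last case-2 event among degree-$\ge 2$ vertices, and nothing in your outline prevents that degree from being of order $N$: Lemma~\ref{lm:1} in case~1 gives only $\Delta(t+1)\le\Delta(t)$, so if no further case-2 event occurs after the very first placement, $\Delta_0$ could sit near $N-1$. Your cleanup then requires $\Theta(NP)$ degree-$1$ vertices, but the hypothesis $|E|\ge N(P-1)$ does not supply that many. For instance at $s=1$ one has $|E|\approx \frac{n}{H_{N,1}}(N-\ln N)$, so $|E|\ge N(P-1)$ forces only $n\gtrsim (P-1)H_{N,1}$, whence $n\,p_2=n\,2^{-1}/H_{N,1}\gtrsim (P-1)/2$, which is $O(P)$ and nowhere near $NP$. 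The asserted inequality ``$n\,p_2$ exceeds the crude worst-case requirement of order $NP$'' is therefore false in general.

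The paper avoids this by arguing at every degree level, not just at degree~$1$: after any case-2 event at degree $d_{(t)}>1$, the Zipf tail below $d_{(t)}$ still carries at least $(P-1)d_{(t)}$ edges (via the monotonicity $|E'|/(k'+1)\ge |E|/N\ge P-1$, where $k'=d_{(t)}-1$ is the next occupied degree). Those edges suffice to raise every lagging partition to the new maximum, forcing another case-2 event at a strictly smaller degree. Iterating drives the case-2 degrees down to~$1$, at which point Lemma~\ref{lm:1} directly yields $\Delta\le 1$. In other words, the work you defer to degree-$1$ vertices has to be distributed across all degree levels, and that is exactly where the Zipf hypothesis and the constraint $|E|\ge N(P-1)$ are spent.
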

	\begin{proof}
		We sketch the proof.
		It builds on the observation that
		the final edge imbalance $\Delta(n)$ is at most 1
		if the maximum weight $\omega(t)$ can be increased by placing a degree-1 vertex
		(Lemma~\ref{lm:1}, Eq~\ref{lm:1:2}).
		Degree-1 vertices are abundant under the assumption of the Zipf distribution.
		As such, one can show that at
		any step $t$ where $d_{(t)}>\Delta(t)$, there are at least $(P-1) \Delta(t)$
		edges among the unplaced vertices, i.e.,
		$\sum_{i=t^{'}}^{m-1} d_{(i)}\ge (P-1)\Delta(t)$.
		This relation follows from the assumption $|E|\ge P(N-1)$
		and the recurrence
		$\sum_{i=1}^{N_1} (i-1)\:i^{-s}/N_1\ge \sum_{i=1}^{N} (i-1)\:i^{-s}/N$
		if $N_1<N$.
	\end{proof}

	The condition $|E|\ge N(P-1)$ can be understood as follows:
	As the highest degree is $N-1$, and all edges pointing to this vertex
	are placed in the same partition, at least one partition will have
	$N-1$ edges. In order to have edge balance, the other $P-1$ partitions
	may have at most $N$ edges. As such, $|E|\ge(N-1)+(P-1)(N-2)=P(N-2)+1$ is a necessary
	condition for edge balance.
	The theorem makes only a slightly stronger requirement.
	
	\begin{theorem}[Vertex balance]
		\label{th:2}
		Assume a graph $G=(V,E)$ and a number of partitions $P$
		as required for Theorem~\ref{th:1}.
		Assume that $n\ge N\:H_{N,s}$.
		Under these conditions, upon
		placement of vertices $v_{(0)},v_{(1)},\ldots,v_{(m-1)}$
		by Algorithm~\ref{algo:re}, it holds that $\delta(m)<N/P$
		and upon placement of all vertices
		by Algorithm~\ref{algo:re}, it holds that $\delta(n)\le 1$.
	\end{theorem}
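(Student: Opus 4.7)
The plan is to prove the two claims in sequence: first bound $\delta(m)$ at the end of Phase~1 (placement of non-zero degree vertices), then use Phase~2 to collapse the imbalance to at most~1. For the Phase~1 bound $\delta(m)<N/P$, the approach is to partition the placement sequence $v_{(0)},v_{(1)},\ldots,v_{(m-1)}$ into \emph{degree classes}: maximal runs of consecutively placed vertices of identical degree. This is possible because the algorithm processes vertices in decreasing-degree order. Within a single class of vertices of common degree $d_k$, the $\argmin_i w[i]$ rule drives edge weights toward equality, and once they equalize the remaining placements in the class proceed in round-robin fashion across partitions, so each class on its own only contributes a bounded vertex-count spread. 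I would then combine the bound on edge imbalance from Lemma~\ref{lm:1} with the Zipf assumption to control how these spreads accumulate. Under $n\ge N\:H_{N,s}$, the number of vertices of degree $k-1$ is $n\:p_k=n\:k^{-s}/H_{N,s}\ge N\:k^{-s}$, which is especially large for small $k$, so the low-degree classes that perform most of the ``balancing work'' are well-populated. The desired $N/P$ bound follows by aggregating the per-class contributions and invoking the same monotonicity property of the Zipf distribution used in the proof of Theorem~\ref{th:1}.

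For the Phase~2 bound $\delta(n)\le 1$, the argument is more direct. Phase~2 places each zero-degree vertex on the partition with the currently smallest vertex count via $p=\argmin_i u[i]$, and this rule monotonically reduces $\delta$ by filling the lightest partitions first; once all partition counts are within~1 of each other, any further placements simply rotate through the partitions and preserve $\delta\le 1$. The only thing to verify is that we have enough zero-degree vertices to equalize from $\delta(m)<N/P$. The total number of zero-degree vertices is $n-m=n\:p_1=n/H_{N,s}$, which by the hypothesis $n\ge N\:H_{N,s}$ is at least $N$. The number of placements required to bring the counts within~1 is at most $(P-1)\:\delta(m)<(P-1)\:N/P<N$, so Phase~2 always has sufficient material and reaches $\delta(n)\le 1$.

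The main obstacle is proving the tight $N/P$ bound in Phase~1 rather than the weaker $O(N)$ bound that a naive ``one-per-degree-class'' argument would yield. Specifically, one must show that the per-class imbalances do not stack additively over all $N$ classes but instead partially cancel, because every time the edge-argmin rule routes a vertex to an under-loaded partition it simultaneously raises that partition's vertex count, coupling the two balance objectives. Making this coupling precise, in conjunction with the Zipf monotonicity recurrence invoked in Theorem~\ref{th:1}, is where the work concentrates; the remainder of the proof is comparatively routine bookkeeping.
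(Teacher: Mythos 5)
Your Phase~2 argument is sound and matches the paper's: placing zero-degree vertices by $\argmin_i u[i]$ needs at most $(P-1)\,\delta(m)$ placements to bring all counts within~1, and the Zipf assumption with $n\ge N\,H_{N,s}$ guarantees $n\,p_1=n/H_{N,s}\ge N>(P-1)\,N/P$ zero-degree vertices, after which further placements keep $\delta\le 1$. The problem is Phase~1. The bound $\delta(m)<N/P$ is the substance of the theorem, and your proposal does not actually establish it: you describe a decomposition into equal-degree classes, assert that each class contributes a bounded spread, and then say the $N/P$ bound ``follows by aggregating the per-class contributions'' — while simultaneously conceding in your final paragraph that showing the per-class imbalances do not stack additively (i.e.\ getting $N/P$ rather than $O(N)$, or worse, a bound growing with the number of classes) ``is where the work concentrates.'' That is precisely the missing idea, not routine bookkeeping, so as written the proof of the first claim is a plan rather than an argument. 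Note also that the intra-class picture (``weights equalize, then round-robin'') is not quite right in general: exact equalization need not occur, and the vertex-count damage is done \emph{across} classes — a partition that just absorbed a degree-$d_{(t_{i-1})}$ vertex receives no further vertices while other partitions each take up to $\lceil d_{(t_{i-1})}/d_{(t_i)}\rceil$ lighter vertices to catch up in edge weight.

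The paper closes this gap differently: it indexes the analysis not by degree classes but by the steps $t_0,\ldots,t_l$ at which the maximum edge weight $\omega$ increases. Between consecutive such steps the vertex imbalance grows by at most $\lceil d_{(t_{i-1})}/d_{(t_i)}\rceil-1\le d_{(t_{i-1})}-d_{(t_i)}$, and these increments telescope to $d_{(t_0)}-d_{(t_l)}=(N-1)-1<N$ (using $d_{(t_l)}=1$ from Theorem~\ref{th:1}). The factor $1/P$ then comes from the observation that after each increase of $\omega$ the currently heaviest partition is not assigned any vertices until $\omega$ rises again, so the accumulated deficit is spread over all $P$ partitions rather than concentrated on one, giving $\delta(m)<N/P$. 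If you want to salvage your degree-class route, you would need an analogue of exactly this telescoping-plus-distribution argument; without it, the first half of the theorem remains unproved.
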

	\begin{proof}
		The proof first shows that the vertex imbalance
		is bound as $\delta(m)<N/P$ after placing
		vertices with non-zero degree. Then, vertex balance can be
		achieved if at least $\delta(m)\:(P-1)$ zero-degree vertices
		are available, which follows from the properties of the
		degree distribution.
	\end{proof}

	The condition $n\ge N\:H_{N,s}$ in Theorem~\ref{th:2}
	dictates that there should be a
	sufficiently large number of vertices in the graph.
	The number of vertices is independent of the parameters $N$
	and $s$ which determine the shape of the degree distribution.
	This constraint is not stringent, e.g., if $s=1$, then
	the requirement is $n\ge 2\:N$.

	\begin {table*}[t!]
	\caption{Characterization of real-world and synthetic graphs used in experiments.}
	\label{tab:graphdetails}
	\small
	\scriptsize
	\centering
	\begin{tabular}{|l||r|r|r|r|r|r|r|r|l|}
		\hline
		& & & Max. & \% vertices with & \% vertices with & & & \\
		Graph & Vertices & Edges & Degree & zero in-degree  & zero out-degree  & $\delta(n)$ & $\Delta(n)$ &Type\\ \hline
		\hline
		Twitter~\cite{kwak2010twitter} & 41.7M & 1.467B & 770,155 &14\%&4\%&1&1& directed \\ \hline
		Friendster~\cite{SNAP}&	125M&	1.81B & 4,223&48\%& 37\%&1&1& directed \\\hline
		Orkut~\cite{SNAP}&	3.07M	&234M & 33,313 &0\%(186)&0\%(186)&2&1& undirected \\\hline
		LiveJournal~\cite{SNAP}	& 4.85M&	69.0M & 13,906 &7\%&21\%&1&1& directed \\\hline
		Yahoo\_mem~\cite{SNAP}&	1.64M&	30.4M & 5,429&0\%(0)& 0\%(0)&9&3& undirected \\\hline
		USAroad~\cite{zhang2015numa} & 23.9M & 58M & 9 &0\%(1)&0\%(1)&1&1& undirected \\ \hline
		Powerlaw ($\alpha=2$) \footnotemark & 100M & 294M &132,423  &0\%(99)&0\%(99)&1&1& undirected \\ \hline
		RMAT27~\cite{Ligra} & 134M & 1.342B & 812,983 &69\%&69\%&1&1& directed \\\hline
	\end{tabular}
\end{table*}

VEBO is applied to 8 graphs with a selection of
synthetic and real-world graphs (Table~\ref{tab:graphdetails}).
Seven graphs are power-law graphs. The USARoad graph represents
a road network and has nearly constant degree.
VEBO calculates an optimal
vertex- and edge-balanced placement with up to
384 partitions, i.e., $\Delta(n)=1$ and $\delta(n)=1$, for 6 graphs,
including the USAroad graph which is not scale-free.
For the remaining 2 graphs the largest discrepancy between
partitions is less than 10 edges or vertices out of millions.

We have introduced several constraints in the proofs, namely
$P\le |E|/(N-1)$ and $|V|>N\:H_{N,s}$. It can be observed from
Table~\ref{tab:graphdetails} that these constraints pose
no practical limits.

Theorem~\ref{th:2} is built on the premise that there are a substantial
number of vertices with zero in-degree. This happens frequently
in directed scale-free graphs, but less frequently in undirected
graphs (Table~\ref{tab:graphdetails}).
Nonetheless, the degree distribution in scale-free graphs is such
that VEBO achieves both edge and vertex balance in practice.


Algorithm~\ref{algo:re} has one drawback: vertices with
consecutive IDs in the original graph tend to be dispersed
across partitions, breaking any spatial locality that may be present in the
graph.
To retain spatial locality, we adjust phases 2 and 3 of the algorithm
to (i)~calculate how many vertices with the same degree are placed on each partition;
(ii)~assign blocks of consecutive vertices to the same partition.
We use this modification for the remainder of this paper.
\footnotetext{\url{https://github.com/snap-stanford/snap/tree/master/examples/graphgen}}
\subsection{Time Complexity}
Algorithm~\ref{algo:re} consists of three consecutive loops iterating
over the vertices. In the first two loops, all statements take a constant
number of time steps, except for the $\argmin$ operation which takes
$\mathcal{O}{(\log P)}$ steps when implemented using a min-heap.
As such these loops take time $\mathcal{O}(|V|\log P)$.
Sorting the vertices by degree can be achieved in 
$\mathcal{O}(|V|)$ using knowledge of the degrees ($0,\ldots,|V|-1$),
similar to radix sort.
The total time complexity of the VEBO algorithm is thus
$\mathcal{O}(|V|\log P)$.

Previously studied vertex reordering algorithms
are computationally more complex.
The algorithm presented by Li~\emph{et al}~\cite{li:17:spac}
has polynomial time complexity in $|V|$.
Gorder~\cite{wei:16:gorder} takes
$\mathcal{O}(\sum_{v\in V} (deg_{out}(v))^2)$ steps
where $deg_{out}(v)$ is the out-degree of vertex $v$.
The time complexity of RCM
is $\mathcal{O}(N \log N |V|)$ where $N$ is the
highest vertex degree.\footnote{\url{http://www.boost.org/doc/libs/1_66_0/libs/graph/doc/cuthill_mckee_ordering.html}}
The difference in complexity is evident as these algorithms
solve a more complex problem.



\begin {table}[t!]
\caption{Graph algorithms and their characteristics.
Traversal direction (B=backward, F=forward) is used by Ligra and Polymer.
Vertex (V) or Edge (E) orientation is used by GraphGrind. 
Frontiers (F) are dense (d), medium-dense (m) or sparse (s).}
\label{tab:app}
\scriptsize
\centering
\begin{tabular}{|p{0.75cm}||p{.45\columnwidth}|c|c|c|}
\hline
\textbf{Code} & \textbf{Description}
& \textbf{B/F}
& \textbf{V/E}
& \textbf{F}
\\ \hline
\hline
BC &
betweenness-centrality~\cite{Ligra}
& B
& V 
& m/s
\\ \hline
CC &
connected components using label propagation~\cite{Ligra}
& B
& E 
& d/m/s
\\ \hline
PR &
Page-Rank using power method (10 iterations)~\cite{page:99:pagerank}
& B
& E
& d
\\ \hline
BFS &
breadth-first search~\cite{Ligra}
& B
& V 
& m/s
\\ \hline
PRD &
optimized Page-Rank with delta-updates~\cite{Ligra}
& F
& E 
& d/m/s
\\ \hline
SPMV &
sparse matrix-vector multiplication (1 iteration)
& F
& E 
& d
\\ \hline
BF &
single-source shortest path (Bellman-Ford)~\cite{Ligra}
& F
& V 
& d/m/s
\\ \hline
BP &
Bayesian belief propagation~\cite{zhang2015numa} (10 iterations)
& F
& E 
& d
\\ \hline
\end{tabular}
\end{table}

\section{Evaluation Methodology}
\label{sec:evalm}
%
We experimentally evaluate VEBO and compare it to two state-of-the-art
vertex reordering algorithms:
RCM~\cite{george:94:sparse}
and Gorder\footnote{\url{https://github.com/datourat/Gorder}}~\cite{wei:16:gorder}. 
The RCM algorithm aims to reduce the bandwidth of a sparse matrix
and is known to work well for applications in numerical analysis~\cite{george:94:sparse}.
Gorder aims to improve temporal locality in graph analytics.

We use three shared memory graph processing systems:
Ligra~\cite{Ligra}, Polymer~\cite{zhang2015numa}
and GraphGrind~\cite{sun:17:ics,sun:17:icpp}.
They model graph analytics as iterative algorithms
where a set of \emph{active vertices}, known as \emph{frontier},
is processed on each iteration. Vertices become active when the
values calculated for them are updated.
The three systems use two
key functions:
the \emph{edgemap} function applies an operation to all edges
whose source vertex is active,
while the \emph{vertexmap} function applies an operation to all
active vertices.
All systems implement the direction reversal heuristic~\cite{beamer:12:bfs}
and dynamically adjust the frontier data structures depending on the
frontier size. 

The frontier varies
during the execution and affects the best way to traverse the graph.
Frontier density is measured as the number of active vertices
and active edges divided by the number of edges.
There are three types of frontiers, dense, medium-dense and sparse in GraphGrind.
Table~\ref{tab:app} shows these properties for some
commonly used graph algorithms.

For the purpose of this work,
the key differences between these systems is in the scheduling of
parallel work and memory locality optimization.
Ligra expresses parallelism using Cilk~\cite{frigo:98:cilk5}, which is fully
dynamically scheduled.
Ligra contains no specific optimizations for memory locality.
Polymer expresses parallelism using POSIX threads and uses static scheduling.
GraphGrind uses a mixture of static and dynamic scheduling.
Static scheduling is used
to bind partitions to NUMA sockets, while
dynamic scheduling is used internally in a socket to distribute work
across threads.
GraphGrind increases temporal locality
by creating more partitions than threads~\cite{sun:17:icpp}
and by traversing edges in Hilbert order~\cite{murray:13:naiad}.



We evaluate the performance benefits of VEBO experimentally on
a 4-socket 2.6GHz Intel Xeon E7-4860 v2 machine, totaling 48 threads
(we disregard hyperthreading due to its inconsistent impact on performance)
and 256$\:$GB of DRAM.
We compile all codes using the Clang compiler with Cilk support.
We evaluate 8 graph analysis algorithms (Table~\ref{tab:app}),
using 8 widely used graph data sets (Table~\ref{tab:graphdetails}).
Our evaluation is missing
results for Betweenness Centrality (BC) on Polymer
as Polymer does not provide an implementation for it.
We exclusively present results using 48 threads and present
averages over 20 executions.

We generate 4 partitions with VEBO for Polymer, as it uses one partition
per NUMA node and we generate 384 partitions with VEBO for GraphGrind,
which is their recommended~\cite{sun:17:icpp}.
Ligra does not partition graphs.


\section{Experimental Evaluation}
\label{sec:eval}


\subsection{Performance Overview}
We evaluated the execution time achieved with
the original graph, RCM, Gorder and VEBO using each of
the Ligra, Polymer and GraphGrind processing systems (Table~\ref{tab:order}).
GraphGrind reorders edges in Hilbert order when using
the COO~\cite{sun:17:icpp} for all results except
for VEBO. While VEBO works well in conjunction with Hilbert order,
we found it works even better when storing edges in CSR order.
The reason behind this is explained in Section~\ref{sec:sfc}.
We will first discuss the results for the power-law graphs; the
USARoad graph is discussed separately.

The best
vertex order for Ligra varies between algorithms and between graphs.
Sometimes, all ``optimized'' vertex orders result in worse performance
compared to using the original graph (e.g., BFS on the Twitter graph).

Gorder assumes that all vertices
and edges are active, as it does not know how the frontier will evolve
during computation.
As such, algorithms with dense frontiers benefit most: PR, PRD, SPMV, BF and BP.
Gorder often results in a slowdown
for algorithms with sparse frontiers, such as CC, BC and BFS. 

\begin{table*}[t!]
  \caption{Runtime in seconds of Ligra, Polymer and GraphGrind using original graph, VEBO, Gorder and RCM.
    The fastest results for each combination of algorithm, graph and framework are indicated in bold-face. Slowdowns over original graphs are indicated in italics.}
	\scriptsize
	\centering
        \addtolength{\tabcolsep}{-1pt}
	\begin{tabular}{|c|c||r|r|r|r|}
    \hline
     \multicolumn{2}{|c||}{}&\multicolumn{4}{|c|}{\textbf{Ligra}} \\
    \hline
    \textbf{Graph} &  \textbf{Algo.} 
    & \multicolumn{1}{c|}{\textbf{Orig.}}    & \multicolumn{1}{c|}{\textbf{RCM}}
    & \multicolumn{1}{c|}{\textbf{Gorder}} & \multicolumn{1}{c|}{\textbf{VEBO}}\\
    \hline
    \hline
    
    \parbox[t]{2mm}{\multirow{8}{*}{\rotatebox[origin=c]{90}{Twitter}}}
    & CC & \textbf{3.132} & \textit{4.242} & \textit{3.217} &\textit{3.596}  \\
    & BC & 2.798 & 2.492  & \textit{4.248} & \textbf{1.846}\\
    & PR & 22.143 &21.702& 20.142 &  \textbf{19.509}  \\
    & BFS & \textbf{0.347} & \textit{0.699} & \textit{0.581} &  \textit{0.472}\\
    & PRD& 35.110 &\textit{43.617} & \textbf{32.695} & \textit{38.589}  \\
    & SPMV & 4.311 &2.281  & \textbf{1.602} & 1.847 \\
    & BF & 4.255 & \textit{4.649} & \textit{9.394} &\textbf{3.154}  \\
    & BP & 68.767 &\textit{105.223} & \textbf{64.242} & \textit{90.439}  \\
    \hline  
    
    \parbox[t]{2mm}{\multirow{8}{*}{\rotatebox[origin=c]{90}{Friendster}}}
    & CC & 7.031 & \textbf{6.636} & \textit{13.517} & 6.831 \\
    & BC & 5.499 & 3.212  & 5.038 &\textbf{3.170}  \\
    & PR & 47.233 & 37.113  & \textbf{29.532} &36.587  \\
   & BFS & 1.441 & \textbf{0.808} & \textit{1.620} & 1.073 \\
    & PRD& 65.886 &64.210  & \textbf{41.021} & 58.640 \\
    & SPMV & 10.112 & 3.730  & 7.300 &\textbf{3.535} \\
    & BF& 8.884 &  6.161& \textbf{6.136}  &8.787 \\
    & BP& 151.742 & 132.865 & \textbf{99.292} &125.785  \\
    \hline  
    
    \parbox[t]{2mm}{\multirow{8}{*}{\rotatebox[origin=c]{90}{RMAT27}}}
    & CC & 3.544 & \textit{4.690} & \textit{6.779} & \textbf{3.181}  \\
    & BC & 2.567 & \textit{3.544} & \textit{5.991} &\textbf{2.250}  \\
    & PR & 29.965 &  27.012 & \textit{43.379} &\textbf{21.266}\\
    & BFS& \textbf{0.493} & \textit{0.827} & \textit{0.701} &  \textit{0.647}\\
    & PRD& 17.688 &\textit{20.243} & \textit{20.962} &  \textbf{16.726} \\
    & SPMV & 3.883 & 2.657 & 3.188 & \textbf{2.340} \\
    & BF& 3.718 & 3.320 & \textbf{2.196} & 2.735 \\
    & BP& 75.336 &\textit{76.805}& \textbf{70.729} &  73.196  \\
    \hline  
    
    \parbox[t]{2mm}{\multirow{8}{*}{\rotatebox[origin=c]{90}{PowerLaw}}}
    & CC & 5.281 & 4.940 & \textbf{3.302} &  3.622 \\
    & BC & 2.902 & 2.667 & \textbf{2.331} & 2.589\\
    & PR &12.263 &  11.621  & 11.954 &\textbf{11.383}\\
    & BFS &1.010 & 0.983 & \textbf{0.861} &0.890  \\
    & PRD&22.698 & 21.688 & \textbf{20.821} & 21.947 \\
    & SPMV &1.224 & \textit{1.421}  & 1.174 &\textbf{1.172} \\
    & BF&7.143 & 6.412  & \textbf{6.212} &6.226 \\
    & BP& 23.474 &19.317 & \textbf{18.001} & 21.122  \\
    \hline  
    
    \parbox[t]{2mm}{\multirow{8}{*}{\rotatebox[origin=c]{90}{Orkut}}}
    & CC & 0.151 &\textit{0.172}  & \textit{0.157} &  \textbf{0.149}\\
    & BC & 0.177 & \textbf{0.141} &0.170  &0.160 \\
    & PR & 	3.184 & 2.569 & \textbf{1.899} &2.437  \\
    & BFS & 0.041 &  \textbf{0.038}& \textit{0.052} & 0.041 \\
    & PRD&\textbf{2.301} &\textit{5.074} & \textit{3.252} &\textit{4.502}   \\
    & SPMV & 0.431 &  0.116& \textbf{0.104} &0.202  \\
    & BF& 0.357 & \textit{0.386} & \textbf{0.279} &  \textit{0.402}\\
    & BP& 6.372 &\textit{12.643} & \textit{8.107} &  \textbf{5.398} \\
    \hline  
    
    \parbox[t]{2mm}{\multirow{8}{*}{\rotatebox[origin=c]{90}{LiveJournal}}}
    & CC & 0.133 & \textit{0.170}& \textbf{0.107} &  0.108 \\
    & BC & 0.194 & 0.140 & \textbf{0.128} &0.159  \\
    & PR & 1.219 & \textit{2.594}  & \textbf{0.675} &0.865 \\
    & BFS & 0.056 & 0.047 & 0.047 & \textbf{0.046} \\
    & PRD& 1.464 &\textit{4.747}  & \textbf{1.084} &  \textit{1.629}\\
    & SPMV & 0.171 & 0.119& 0.102 &  \textbf{0.054} \\
    & BF& 0.555 &0.356  & 0.464 &  \textbf{0.255}\\
    & BP & 3.522 & \textit{4.335} & \textbf{1.880} &  \textit{3.672}\\
    \hline 
    
    \parbox[t]{2mm}{\multirow{8}{*}{\rotatebox[origin=c]{90}{Yahoo\_mem}}}
    & CC & 0.080 &\textbf{0.036} & 0.045 &  0.039  \\
    & BC & 0.167 & \textbf{0.052}& 0.057 & 0.081 \\
    & PR & 0.684 &0.285 & \textbf{0.246} &  0.265  \\
    & BFS & 0.035 & 0.026 & 0.025 & \textbf{0.024} \\
    & PRD& 2.501 &\textit{2.590} & \textbf{1.607} &  2.489 \\
    & SPMV & 0.053 & \textbf{0.020} & 0.046 &0.022  \\
    & BF& 0.357 &\textbf{0.166}& 0.237 &  0.174  \\
    & BP& 2.372 & 1.659 & \textbf{1.235} &  1.682\\
    \hline 
    
    \parbox[t]{2mm}{\multirow{8}{*}{\rotatebox[origin=c]{90}{USAroad}}}
    & CC & 38.669 &\textit{54.119}   & \textbf{7.953} & 24.848\\
    & BC & \textbf{4.620} &\textit{4.783}   & \textit{4.964} & \textbf{4.655}\\
    & PR & \textbf{1.559} &\textit{1.855}& \textbf{1.559} &   \textit{1.957} \\
    & BFS & \textbf{1.621} &\textit{ 1.819}& \textit{1.937 }& \textit{1.699 }\\
    & PRD& \textbf{2.886} & \textit{ 2.982} & \textit{2.975 }&\textit{3.505}\\
    & SPMV & \textbf{0.120} &  \textit{0.135} & \textit{0.143 }&\textit{0.163 } \\
    & BF& \textbf{28.848} &\textit{ 29.175} & \textit{34.646 }&\textbf{29.013}  \\
    & BP& \textbf{1.730} & \textit{1.843} &\textit{ 1.785} &\textit{ 2.093 }\\
    \hline 
    \end{tabular}%
\begin{tabular}{||r|r|r|r|}
	\hline
	\multicolumn{4}{||c|}{\textbf{Polymer}} \\
	\hline
	\multicolumn{1}{||c|}{\textbf{Orig.}}    & \multicolumn{1}{c|}{\textbf{RCM}}
	& \multicolumn{1}{c|}{\textbf{Gorder}} & \multicolumn{1}{c|}{\textbf{VEBO}}\\
	\hline
	\hline
	
	2.708 & \textit{2.930}   & \textit{2.972}  & \textbf{2.443}\\
    & & & \\
    20.948 & \textit{22.271}  & \textit{25.934} & \textbf{17.603}\\
    0.323 & 0.321           & \textit{0.336} &  \textbf{0.296}\\
    29.151 & 27.670          & \textit{33.336} & \textbf{19.237}\\
    3.746 & 3.293         & 3.183          & \textbf{1.633}\\
    3.990 & \textit{9.325}  & \textit{11.246} & \textbf{3.036}\\
    57.310 & 50.398 & 53.365 & \textbf{44.366} \\
    \hline
    
	6.523 & \textit{6.775}   & \textit{10.035} & \textbf{5.136}\\
	    & & &  \\
	45.410 & \textit{46.573}  & 32.827          & \textbf{25.776}\\
	1.308 & 1.113           & \textit{2.214} &\textbf{1.003} \\
	50.331 & \textit{63.000} & \textit{56.370} & \textbf{37.997} \\
	8.124 & 6.337         & \textit{9.934} & \textbf{3.364} \\
	7.653 & 7.312           & 7.366           &  \textbf{7.004}\\
	96.113 & 85.475 & 75.392 & \textbf{65.365} \\
	\hline
	
	2.880 & 2.622            & 2.543           & \textbf{2.123}\\
	& & &  \\
	21.645 &  19.958          & 19.278          & \textbf{16.544}\\
	0.456 & 0.440           & 0.443          &\textbf{0.422} \\
	12.134 & 11.034          & \textit{14.366} & \textbf{8.922}  \\
	2.538 & 2.443         & 2.331          & \textbf{2.237}  \\
	3.090 & \textit{4.331}  & \textit{4.672}  & \textbf{2.557}  \\
	68.324 & 50.440 & 58.035 & \textbf{40.384}\\
	\hline
	
	4.063 & \textit{4.597}   & 3.002           & \textbf{2.994}\\
	& & & \\
	8.661  & \textit{9.503}   & \textit{8.917}  & \textbf{7.592} \\
	0.866 & \textit{0.891}  & \textit{0.880} &\textbf{0.806} \\
	16.335 & \textit{18.003} & \textit{16.687} & \textbf{14.887}  \\
	0.885 & \textit{1.003}& 0.807          & \textbf{0.766}\\
    6.032 & \textit{6.224}  & \textit{6.185}  & \textbf{5.996} \\
    17.624 & 16.023 & 16.753 & \textbf{15.336} \\
	\hline
	
	0.123 & \textit{0.151}   & \textit{0.208}  & \textbf{0.116}\\
    & & & \\
    2.003  &  1.893           & 1.888           & \textbf{1.630}\\
    0.039 & \textit{0.047}  & \textit{0.050} &\textbf{0.038} \\
    1.630  & \textit{1.888}  & \textit{1.806}  & \textbf{1.224}  \\
    0.288 & 0.263         & 0.277          & \textbf{0.166}\\
    0.345 & \textit{0.396}  & 0.331           & \textbf{0.313}\\
    5.652  & 4.199  & 4.522  & \textbf{4.038} \\
	\hline   
	
	0.133 & 0.129            & \textit{0.168}  & \textbf{0.107}\\
	& & &  \\
	1.080  & \textit{1.464}   & 1.010           & \textbf{0.808}\\
	0.055 & \textit{0.059}  & \textit{0.065} & \textbf{0.046}\\
	1.320  & \textit{1.555}  & \textit{1.503}  & \textbf{0.994}  \\
	0.134 & 0.122         & 0.112          & \textbf{0.049} \\
	0.468 & 0.577           & \textit{0.534}  & \textbf{0.251} \\
	2.376  & 2.114  & 1.886  & \textbf{1.774} \\
	\hline  
	
	0.049 & \textit{0.058}   & \textit{0.069}  & \textbf{0.038}\\
    & & &  \\
    0.274  & \textit{0.276}   & \textit{0.299}  & \textbf{0.234}  \\
    0.025 & \textit{0.026}  & 0.025          & \textbf{0.024} \\
    1.687  & \textit{2.932}  & 1.533           & \textbf{1.133} \\
    0.049 & 0.045         & 0.042          & \textbf{0.020} \\
    0.197 & 0.191           & \textit{0.312}  & \textbf{0.169}\\
    1.667  & 0.896  & 0.702  & \textbf{0.590}\\
	\hline

	36.877 & \textit{46.338 } & \textbf{7.834} &22.673 \\
    & & & \\
    \textbf{1.075} &\textit{ 1.703 } & \textit{1.382} &\textit{1.216} \\
    \textbf{1.588} & \textit{1.766 }& \textit{1.819 }& \textbf{1.593} \\
    \textbf{2.241} &\textit{ 2.733 }& \textit{2.584} &\textit{ 2.436} \\
    \textbf{0.079} &\textit{ 0.115} & \textit{0.132} & \textit{0.099} \\
    \textbf{24.067} & \textit{28.336} & \textit{31.334} & \textit{25.532} \\
    \textbf{1.343} &\textit{ 1.543}& \textit{1.391 }&\textit{ 1.422 } \\
    \hline
\end{tabular}%
	\begin{tabular}{||r|r|r|r|}
    \hline
    \multicolumn{4}{||c|}{\textbf{GraphGrind}} \\
    \hline
    \multicolumn{1}{||c|}{\textbf{Orig.}}    & \multicolumn{1}{c|}{\textbf{RCM}}
    & \multicolumn{1}{c|}{\textbf{Gorder}} & \multicolumn{1}{c|}{\textbf{VEBO}}\\
	\hline
	\hline
	 1.722 & \textit{2.250} & \textit{2.261} & \textbf{1.089}\\
	 1.478 & \textit{2.697} & \textit{4.188} & \textbf{1.342} \\
	 11.824 & \textit{11.979} & \textit{16.219} & \textbf{9.693}\\
	 0.245 & 0.234          & \textit{0.249} & \textbf{0.210}\\
	 15.102 &\textit{15.352} & \textit{19.613} & \textbf{10.258}\\
	 1.861 &  1.199         & 1.186          & \textbf{0.627} \\
	 3.877 & \textit{7.059} & \textit{9.907} & \textbf{2.735} \\
	 40.412 & 31.850          & 36.314 & \textbf{21.101} \\
	 \hline
	 
	 3.516 & \textit{3.530} & \textit{8.216} & \textbf{3.081} \\
	 3.428 & 2.947          & \textit{6.841} &  \textbf{2.753}\\
	 29.444 & \textit{29.981} & 27.569          & \textbf{15.306} \\
	 0.931 & 0.619          & \textit{1.890} & \textbf{0.513} \\
	 30.108 &\textit{36.666} & \textit{33.223} & \textbf{18.364} \\
	 3.511 &  2.051         & \textit{5.893} & \textbf{0.973} \\
	 7.105 & 6.255          & 6.264          & \textbf{6.131} \\
	 69.526 & 54.147          & 48.586 & \textbf{46.563} \\
	 \hline

	 2.656 & 2.511          & 2.198          & \textbf{1.060} \\
	 2.081 & \textit{3.207} & \textit{5.943} & \textbf{1.393} \\
	 19.250 & 15.395          & 15.602          & \textbf{7.489} \\
	 0.405 & 0.271          & 0.284          & \textbf{0.263} \\
	 9.002  & 8.601          & \textit{10.312} & \textbf{4.324}\\
	 1.814 &  1.506         & 1.364          & \textbf{0.594}\\
	 2.352 & \textit{3.247} & \textit{3.320} & \textbf{1.894} \\
	 40.092 & 29.702          & 31.763 & \textbf{16.866} \\
	 \hline	
	 
	 3.184 & \textit{3.662} & 2.997          & \textbf{1.458} \\
	 2.466 & \textit{3.212} & \textit{2.993} & \textbf{2.257} \\
	 6.936  & \textit{10.337} & \textit{7.334}  & \textbf{5.864} \\
	 0.716 & \textit{0.882} & \textit{0.843} & \textbf{0.659}  \\
	 13.155 &\textit{18.337} & \textit{13.342} & \textbf{11.547}\\
	 0.450 & \textit{0.835} & 0.440          & \textbf{0.391} \\
	 4.315 & \textit{4.863} & \textit{4.446} & \textbf{4.094} \\
	 12.843 & \textit{12.338} & 10.112 & \textbf{9.297} \\
	 \hline
	 
	 0.116 & \textit{0.117} & \textit{0.142} & \textbf{0.102} \\
	 0.172 & 0.161          & 0.162          &  \textbf{0.159}\\
	 1.337  & 1.288           & 1.275           & \textbf{1.022} \\
	 0.037 & \textit{0.040} & \textit{0.047} & \textbf{0.036}  \\
	 1.038  &\textit{1.101}  & \textit{1.095}  & \textbf{0.907} \\
	 0.199 & 0.069          & 0.089          & \textbf{0.061} \\
	 0.306 & \textit{0.377} & 0.258          & \textbf{0.224} \\
	 5.392  & 1.732           & 2.176  & \textbf{1.484} \\
	 \hline
	 
	 0.120 & 0.116          & \textit{0.149} & \textbf{0.083} \\
	 0.183 & 0.160          & \textit{0.210} & \textbf{0.154} \\
	 0.751  & \textit{1.283}  & 0.634           & \textbf{0.527} \\
	 0.046 & \textit{0.049} & \textit{0.057} & \textbf{0.045}  \\
	 1.061  &\textit{1.100}  & \textit{1.172}  & \textbf{0.729}\\
	 0.082 & 0.071          & 0.064          & \textbf{0.033} \\
	 0.305 & \textit{0.376} & \textit{0.365} & \textbf{0.247}  \\
	 1.190  & \textit{1.735}  & 0.750  & \textbf{0.618}\\
	 \hline	 
	 
	 0.042 & \textit{0.049} & \textit{0.057} & \textbf{0.035} \\
	 0.079 & \textit{0.088} & \textit{0.089} & \textbf{0.075} \\
	 0.226  & 0.215           & \textit{0.242} & \textbf{0.207} \\
	 0.024 & \textit{0.025} & \textbf{0.023} & \textbf{0.023} \\
	 0.676  &\textit{2.080}  & 0.665           & \textbf{0.617} \\
	 0.029 & 0.021          & 0.019          & \textbf{0.016} \\
	 0.188 & 0.166          & \textit{0.276} & \textbf{0.155}  \\
	 0.802  & 0.494           & 0.302  & \textbf{0.254} \\
	\hline		
	 
	 30.754 & \textit{41.188} & \textbf{7.709} &  19.829\\
     \textbf{3.892} &\textit{ 4.443}& \textit{4.198 }& \textit{ 3.954}\\
     \textbf{0.707} &\textit{1.330 } &\textit{ 1.282 }& \textit{0.960}\\
     \textbf{1.424} & \textit{1.586 } & \textit{1.728 }& \textit{1.541}\\
     \textbf{1.809} &\textit{2.044} &\textit{ 1.828 }& \textit{ 2.033 }\\
     \textbf{0.053} &\textit{0.103}& \textit{0.110} &  \textit{ 0.058 }\\
      \textbf{21.510} &\textit{ 26.334} & \textit{25.976 }& \textit{22.678} \\
     \textbf{1.245} & \textit{1.334} & \textit{1.258 }&\textit{ 1.268} \\
     \hline
\end{tabular}%
  \label{tab:order}%
\end{table*}%

Ligra does not explicitly partition the graph, yet VEBO results in speedups
for Ligra. This stems from the implicit partitioning applied
when iterating over the CSR or CSC representation using Cilk parallel for loops.
Cilk recursively splits the iteration range in two parts. Each part may
be executed by a distinct worker thread. Cilk loops are most efficient when
the recursive split results in balanced workloads in each part.
While each part has a comparable number of vertices by design of Cilk,
the number of edges traversed by each thread will vary depending on the
graph topology. VEBO improves load balance as every 384-th part of the
iteration range has identical vertex and edge counts.

Overall,
VEBO achieves an average speedup of 1.09x
over Ligra while Gorder has an average speedup of 1.17x over Ligra. 
We attribute this to the use of dynamic scheduling in Ligra,
which compensates for load imbalance, and the lack of locality optimization.
%

VEBO provides consistently best performance
on Polymer (1.41x speedup) and GraphGrind (1.65x speedup); 1.53x speedup
when traversing edges in Hilbert order).
These systems use static scheduling to bind code to NUMA domains,
which makes them more sensitive to load balance than Ligra.
Gorder and RCM are less effective than VEBO for Polymer and GraphGrind
as they try to optimize memory locality but not load balance.
\subsection{A Non-Power-Law Graph: USAroad}
\label{USAroad}
USAroad graph has a degree distribution
that is close to uniform (the maximum degree is 9), shows a distinct
behavior from the power-law graphs.
Execution times are increased for all algorithms
but CC (Table~\ref{tab:order}).
Further analysis has shown that the root problem is a significant degradation
of memory locality.
Road networks typically have strong locality
and can be
partitioned in such a way that there are many internal vertices
and few external vertices, i.e., few vertices have
edges shared with vertices in other partitions~\cite{chen2015optimal}.
VEBO is agnostic of this structure of the graph and thus breaks the locality.

A curious exception is Connected Components (CC, using
label propagation).
Synchronous algorithms propogate only data
calculated in the previous iteration.
For CC, an asynchronous~\cite{low:12:graphlab} 
implementation is correct
and results in an \emph{accelerated propagation} of labels:
labels determined during one iteration of the algorithm
are propagated to other vertices during the same iteration~\cite{Ligra}.
Graph reordering seems to amplify accelerated propagation.
This reduces the number of medium-dense iterations of the algorithm
and explains the speedup.

\begin{figure*}[t]
	\centering	
	\includegraphics[width=.6\columnwidth,clip]{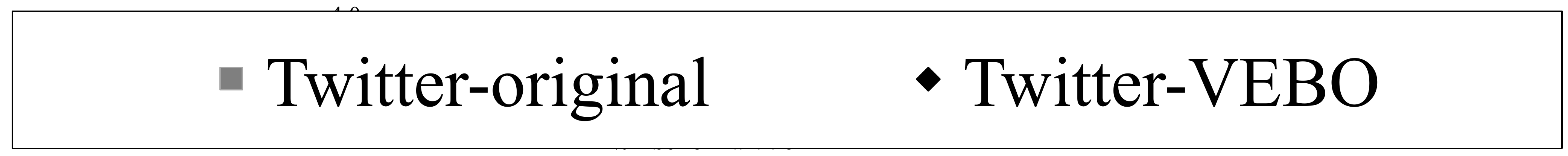}\\[-3mm]
	\subfloat[PR -- time]{
		\includegraphics[width=.19\textwidth,clip]{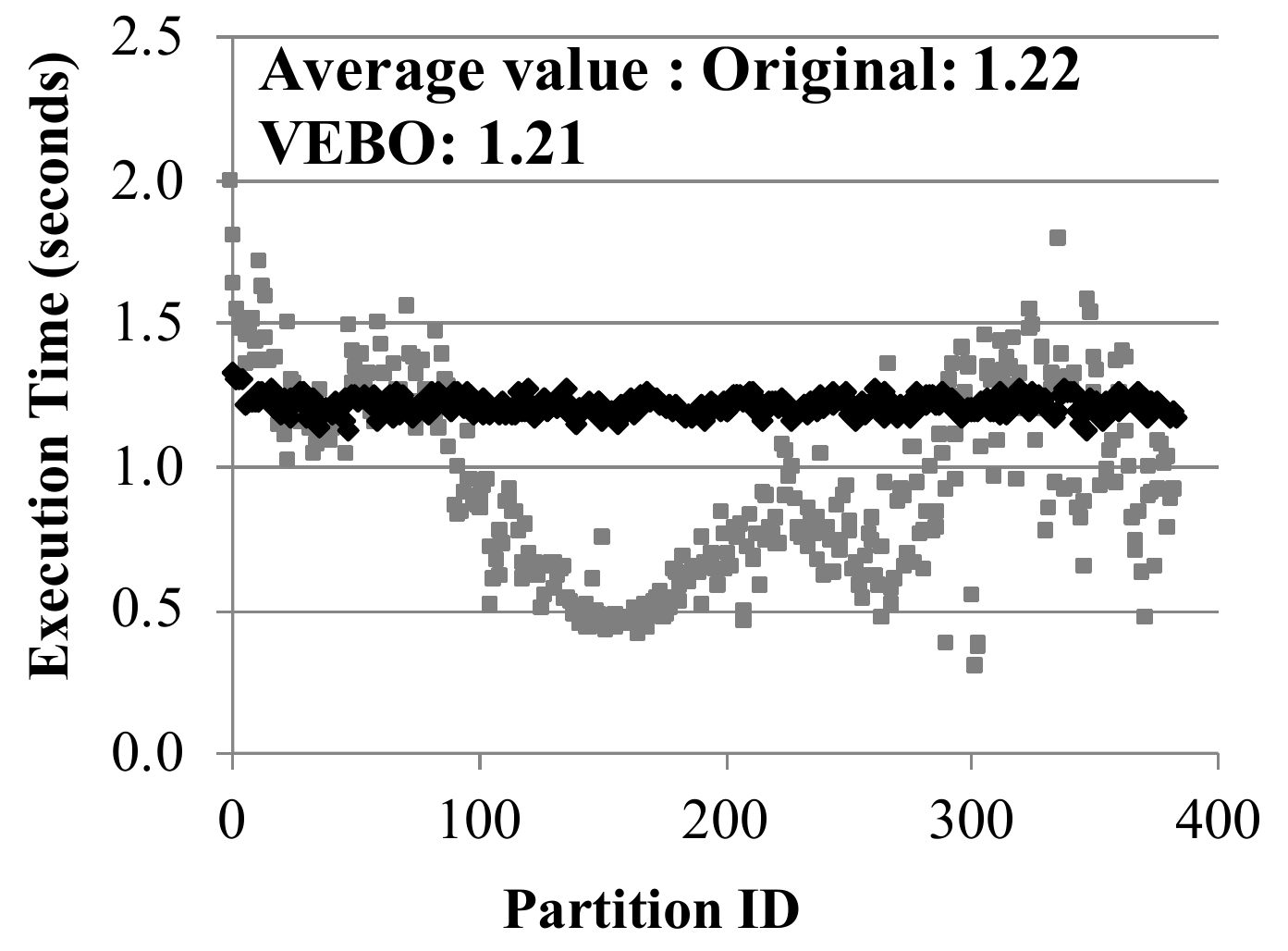}
		\label{fig:pr-time}
		\vspace*{-8mm}
	}
	\subfloat[PR -- local misses]{
		\hspace*{-4mm}
		\includegraphics[width=.19\textwidth,clip]{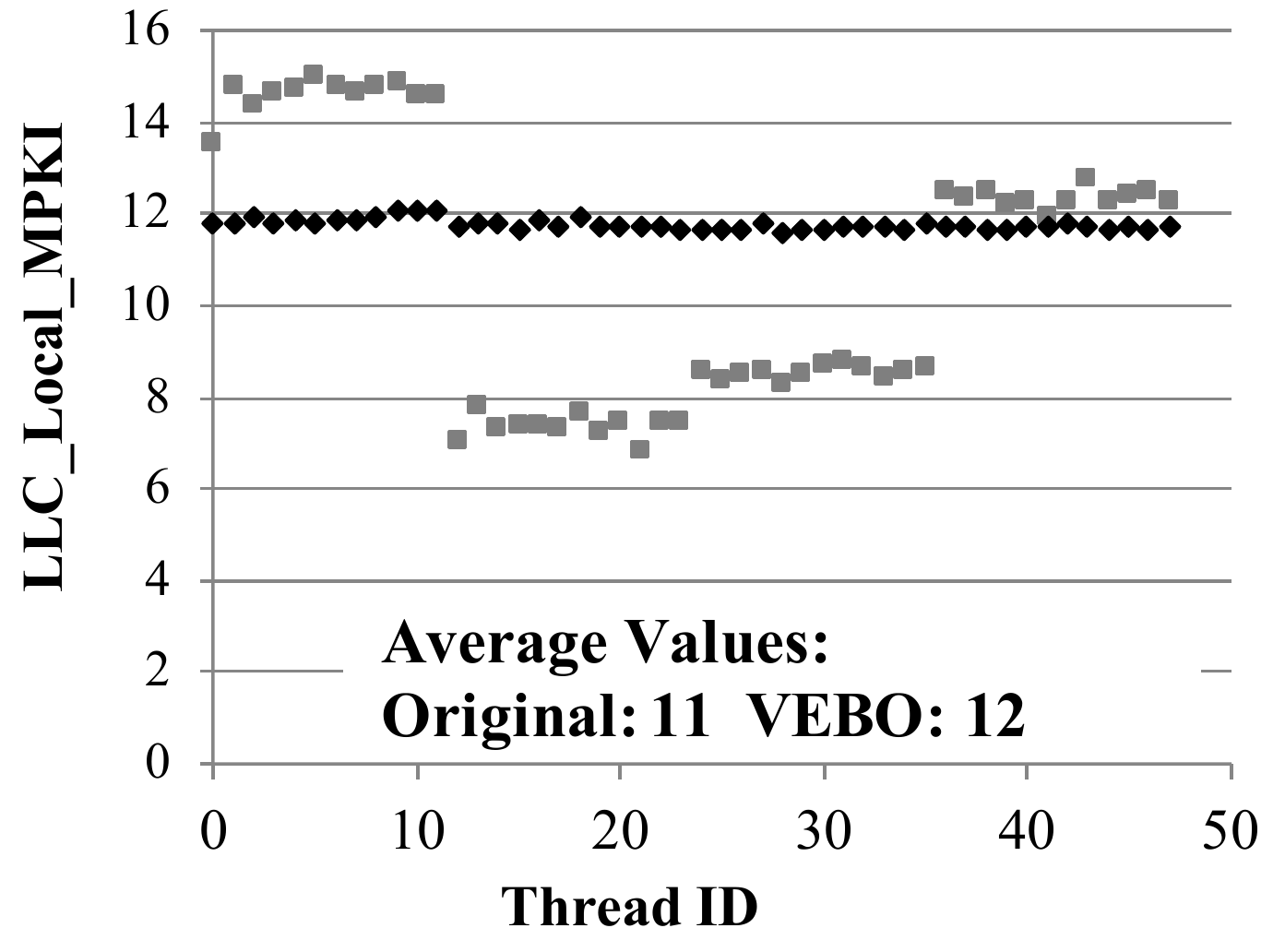}
		\label{fig:pr-local}
		\vspace*{-8mm}
	}
	\subfloat[PR -- remote misses]{
		\hspace*{-4mm}
		\includegraphics[width=.19\textwidth,clip]{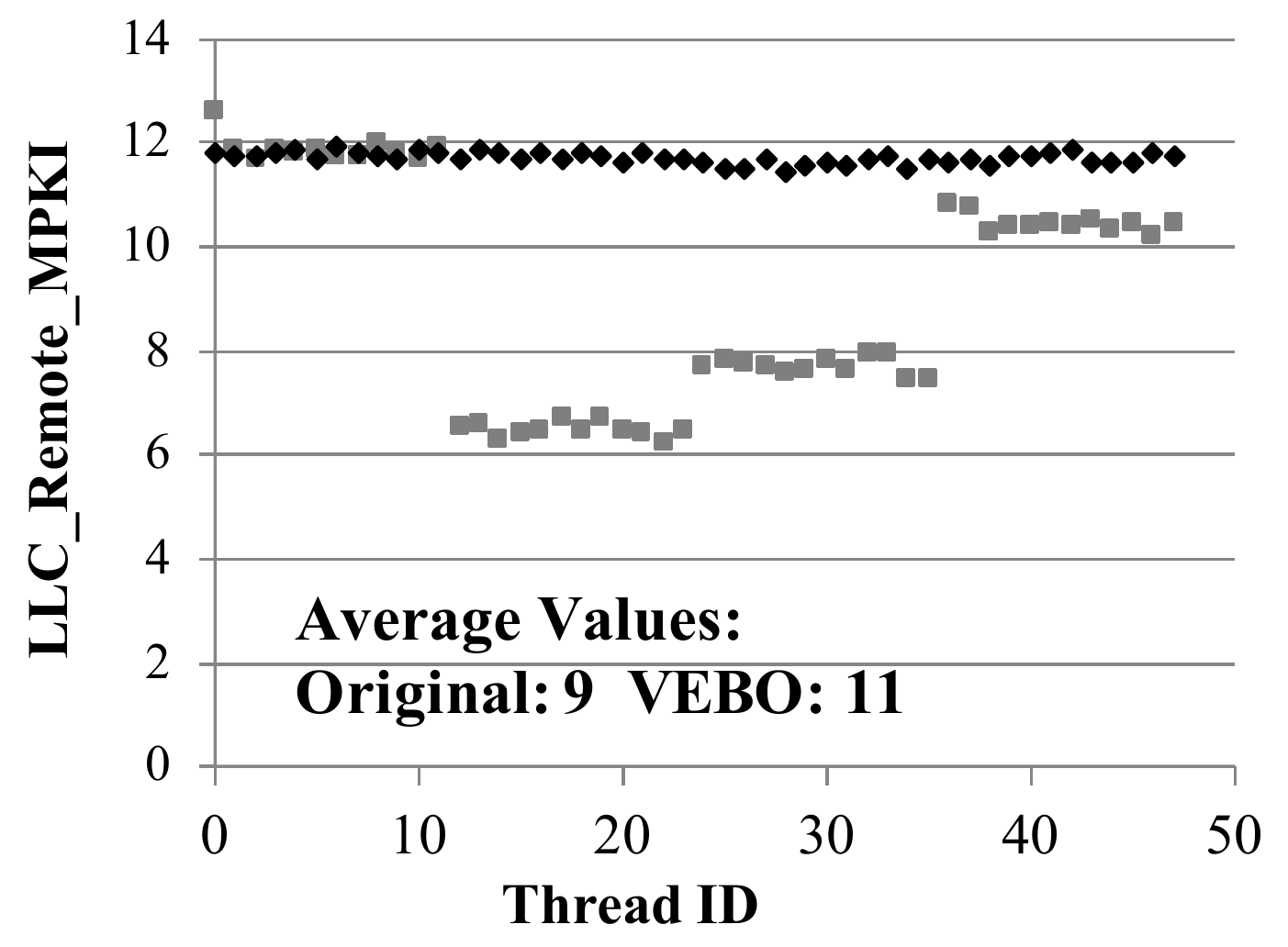}
		\label{fig:pr-remote}
		\vspace*{-8mm}
	}
	\subfloat[PR -- TLB]{
		\hspace*{-4mm}
		\includegraphics[width=.19\textwidth,clip]{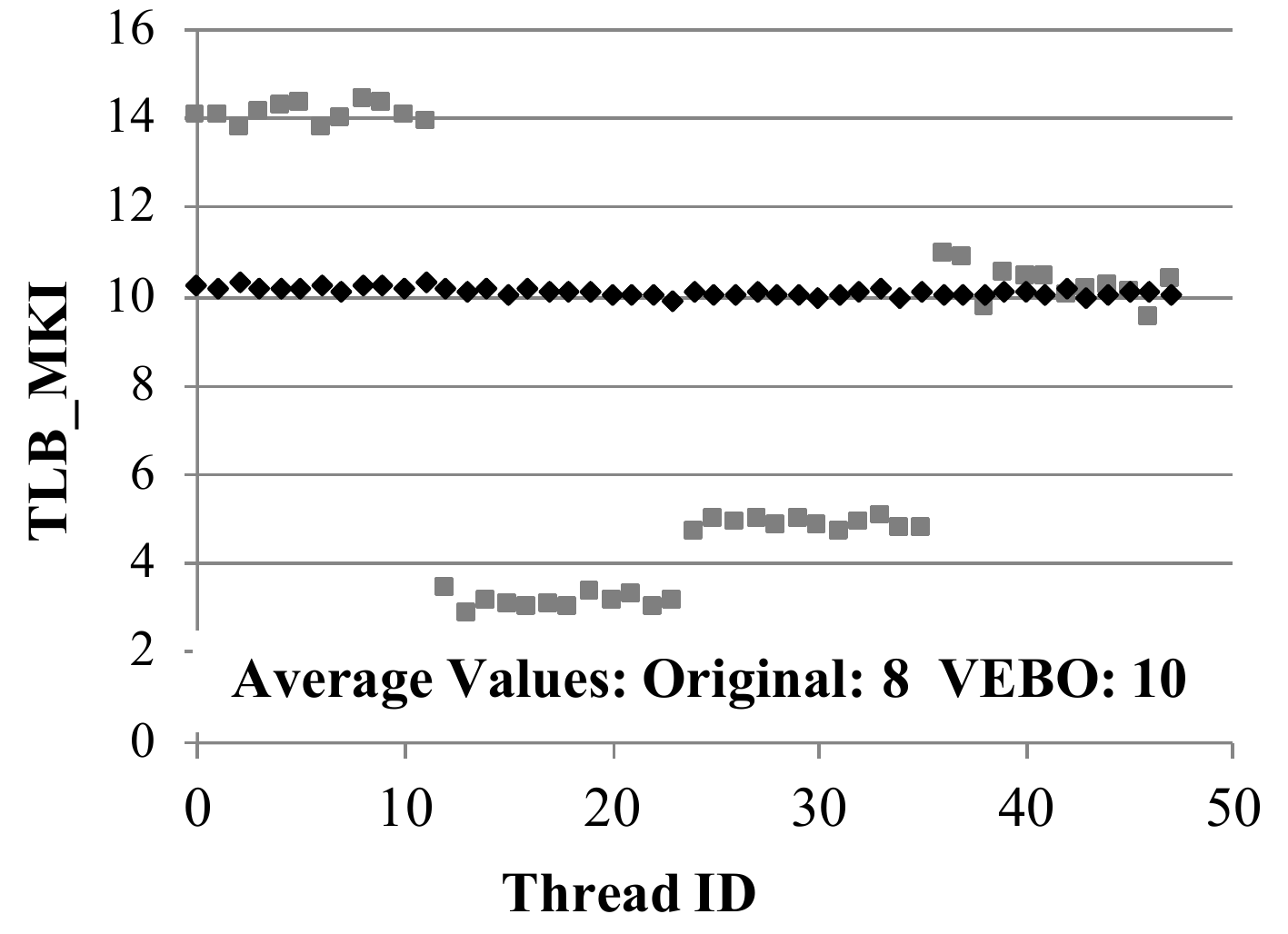}
		\label{fig:pr-tlb}
		\vspace*{-8mm}
	}
	\subfloat[PR -- branches]{
		\hspace*{-4mm}
		\includegraphics[width=.19\textwidth,clip]{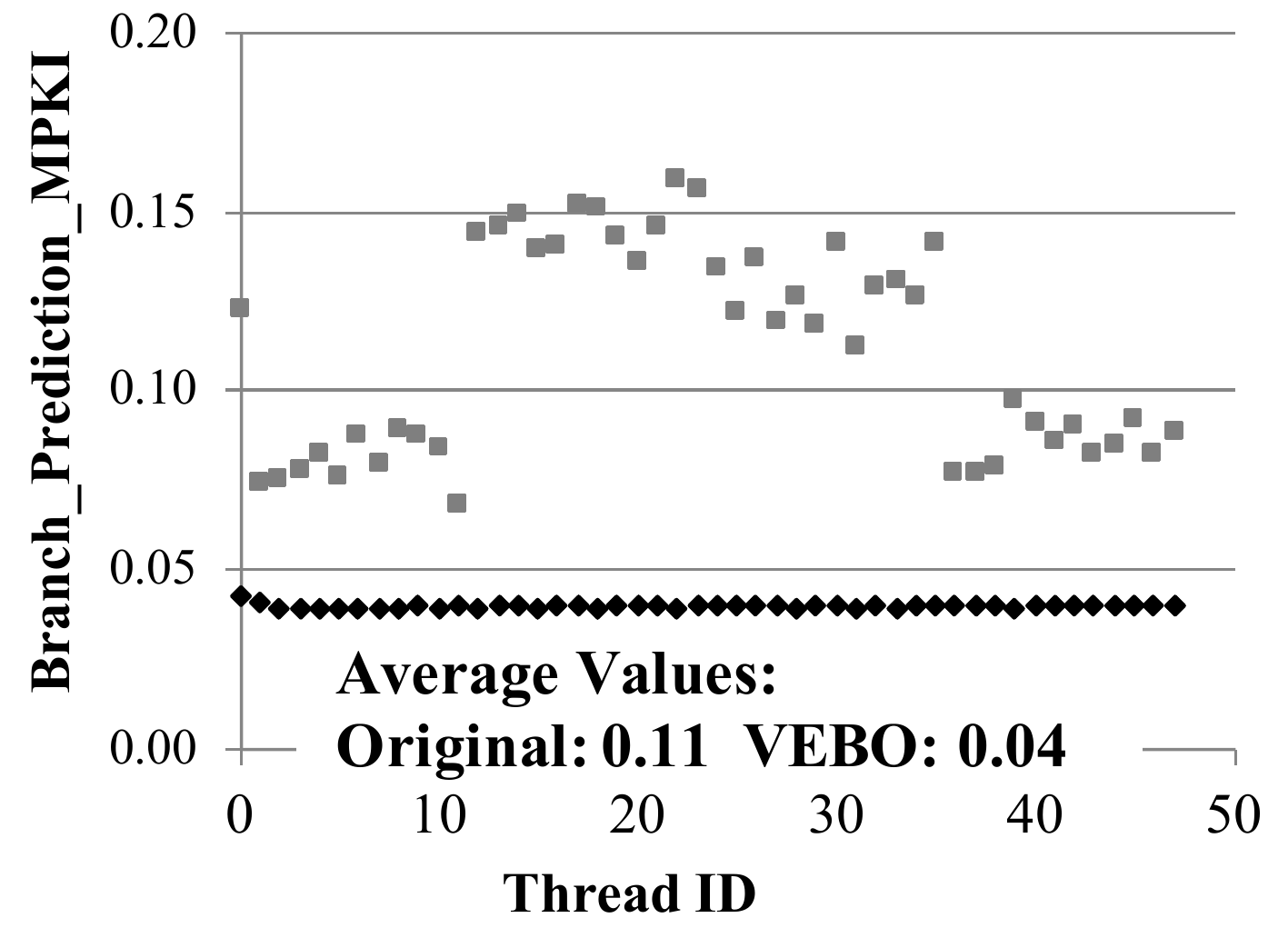}
		\label{fig:pr-branch}
		\vspace*{-8mm}
	}
	
	\caption{Execution time and micro-architectural statistics per partition or per thread for PR with Twitter. Measured on GraphGrind using 384 partitions. Thread $t$ executes partitions $8\:t$ to $8\:t+7$. Architectural statistics expressed in misses per thousand instructions (MPKI). }
	\label{fig:numatimer}
\end{figure*}
\subsection{Random Permutation of Graphs}
We evaluated the execution time of GraphGrind with four orders of vertex IDs: (a)~original vertex IDs of a graph, (b)~VEBO applied in original vertex IDs (c)~a random permutation of the vertex IDs, and (d)~VEBO applied to this random permutation.
Evaluation on the Twitter and USAroad graphs (other graphs show similar trends) shows that the random permutation has higher execution time compared to the other orders (Figure~\ref{fig:random}).
This occurs as a random permutation creates load imbalance. It also removes
locality that may exist result from collecting the graph~\cite{mcsherry:05:pagerank}.
VEBO has lower execution time compared to the random permutations
which demonstrates that VEBO is a sound algorithm and cannot be beaten
easily by any permutation of vertices.
Moreover, VEBO applied to the random permutation corrects load balance
and restores performance to nearly the same level as VEBO applied to the original graph. The difference in performance, if any, can be attributed to
differences in locality, which VEBO does not optimize.
	\begin{figure}[t]
		\vspace*{-5mm}
	\centering	
	\subfloat[Twitter]{
		\includegraphics[width=.48\columnwidth,clip]{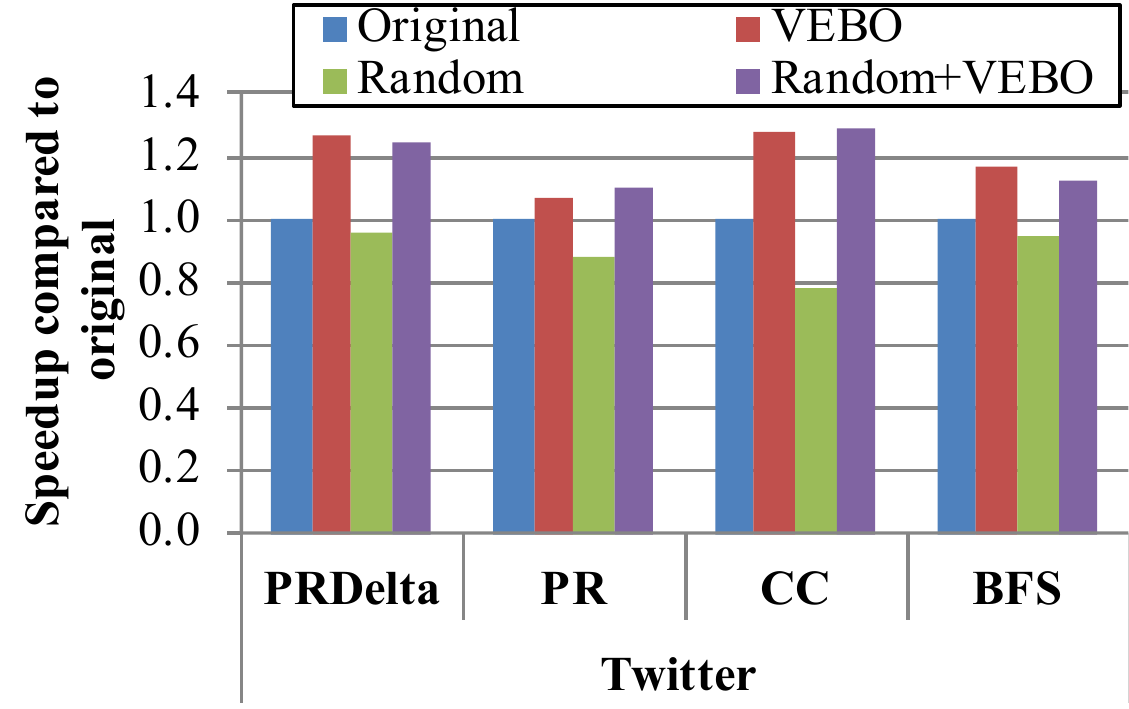}
		\label{fig:twitter_random}
	}
	\subfloat[USAroad]{
		\includegraphics[width=.49\columnwidth,clip]{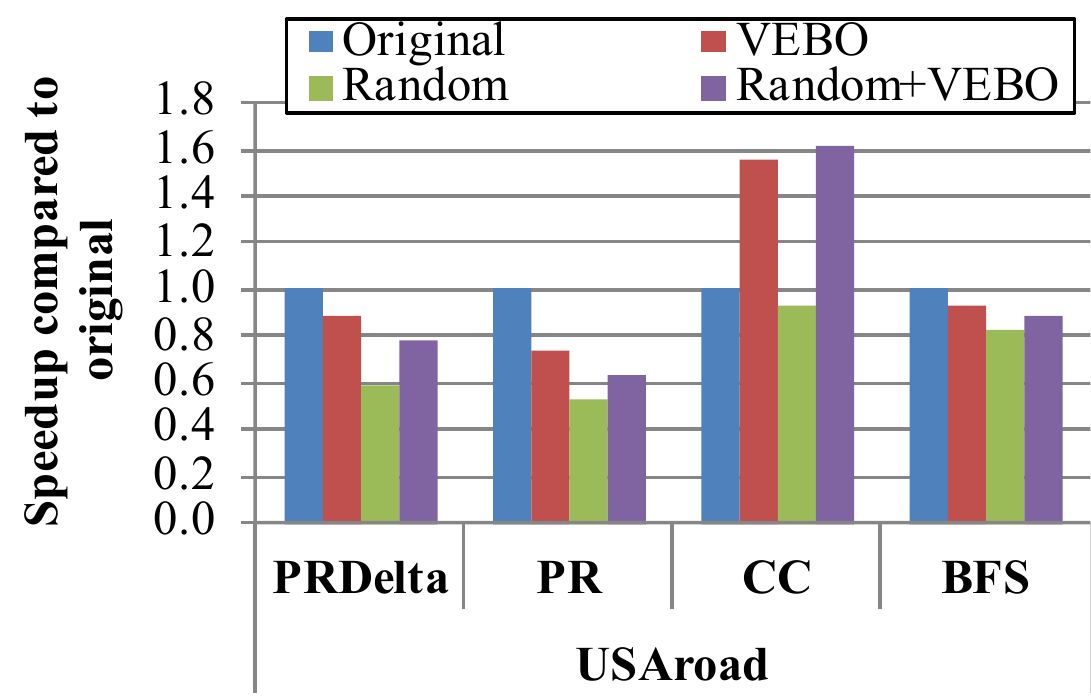}
		\label{fig:USAroad_random}
	}
	
	\caption{Performance of BFS on the Twitter and USAroad graphs using original vertex IDs, VEBO applied to original vertex IDs, a random permutation of the vertex IDs and VEBO applied to random vertex IDs. Execution times are normalised to the left-most case (original vertex IDs).}
	\label{fig:random}
\end{figure}	

\subsection{Sparse Frontier}
\begin{table}[t!]
  \centering
\caption{Distribution of active edge over partitions for the sparse iteration in BFS of Twitter graph using 384 partition. S.D. is standard deviation value. Active Edge/Part is the ideal number of active edge per partition.}
\scriptsize
    \begin{tabular}{|c|c|c|c|c|c|c|}
    \hline
    \multicolumn{2}{|c|}{\textbf{Iteration}} & \textbf{3} & \textbf{4} & \textbf{5} & \textbf{6} & \textbf{7} \\
    \hline
    \multicolumn{2}{|c|}{\textbf{Active Edge}} & 7986019 & 6872249 & 636055 & 54173 & 5926 \\
    \hline
    \multicolumn{2}{|c|}{\textbf{Active Edge/Part}} & 20797 & 17896 & 1656  & 141.1 & 15.43 \\
    \hline
    \multirow{2}[4]{*}{\textbf{Min}} & \textbf{Orig.} & 0     & 0     & 0     & 0     & 0 \\
\cline{2-7}          & \textbf{VEBO} & 0     & 2533  & 194   & 5     & 0 \\
    \hline
    \multirow{2}[4]{*}{\textbf{Median}} & \textbf{Orig.} & 541.5 & 13025 & 742.0 & 46.00 & 0.000 \\
\cline{2-7}          & \textbf{VEBO} & 10114 & 16448 & 1486  & 127.0 & 11.00 \\
    \hline
    \multirow{2}[4]{*}{\textbf{S.D.}} & \textbf{Orig.} & 79964 & 20539 & 3175  & 249.9 & 34.78 \\
\cline{2-7}          & \textbf{VEBO} & 60525 & 16571 & 2462  & 170.5 & 25.93 \\
    \hline
    \multirow{2}[4]{*}{\textbf{Max}} & \textbf{Orig.} & 1114127 & 148077 & 50780 & 3688  & 445 \\
\cline{2-7}          & \textbf{VEBO} & 1112832 & 304976 & 44647 & 3216  & 370 \\
    \hline
    \end{tabular}%
  \label{tab:sparse_edge}%
\end{table}%


Table~\ref{tab:sparse_edge} shows distribution of active edge per iteration of BFS when using 384 partitions (distribution of active destination shows similar trends). Sparse frontiers vary strongly in size. The active edges per partitions is the optimization target as it corresponds to perfectly balanced partitions. VEBO distributes both high-degree and low-degree vertices uniformly over partitions. Compared to VEBO, original has many partitions with zero active degree. Iterations 3-7 dominate execution time during which the load balance is significantly improved by VEBO. VEBO reduces standard deviation up to 1.5x and it reduces the gap between minimum and maximum number of active edge per partition.

\subsection{Analysis of Load Balance}
\label{sec:mpki}
Figure~\ref{fig:motiv} in Section~\ref{sec:motiv} shows
that VEBO balances edge and vertex counts.
We show here that this load balance translates to run-time statistics.
We focus on the PR algorithm for Twitter graph
(Figure~\ref{fig:numatimer}); the Friendster graph is similar, but
not shown for brevity.
We perform this analysis using GraphGrind.

Figure~\ref{fig:pr-time} shows the execution time for
each of the 384 partitions. There is a large variation on the
execution time for the original graph, e.g., from 0.290$\:$s per iteration
to 1.985$\:$s. 
For the VEBO reordered graph (darkest symbols),
the worst-case difference between the fastest and slowest partition
is 0.17$\:$s or more than 10 times less than the original graph.
For most graphs and algorithms, the average
time to process a partition is reduced by VEBO.
This contributes to the speedup caused by VEBO, besides
achieving load balance.

VEBO may reduce each partition's performance but this is compensated
by improved load balance.
For instance, when processing PR for Twitter, the average
execution time per partition is 1.211$\:s$ for VEBO
with a 1.6x spread
and 1.221$\:s$ for the original graph with a 6.9x spread.

VEBO balances execution at the micro-architectural level, namely
miss rates for caches, TLBs and branch predictors (Figure~\ref{fig:numatimer}).
Moreover,
we observed that VEBO improves memory locality for the majority of the graphs,
as the cache and TLB statistics are reduced.
PR for Twitter, however, is a rare counter-example.

Finally, VEBO reduces the branch misprediction rate
(Figures~\ref{fig:pr-branch}).
We attribute this in part to ordering vertices by decreasing degree.
When traversing
the compressed sparse rows (CSR)
and compressed sparse columns (CSC) data structures,
a loop iterates over the edges incident to a vertex. The loop iteration
count is determined by the degree.
In the VEBO graph, subsequent vertices have the same degree
which makes this branch highly predictable.
In the original graph, subsequent vertices
have highly varying degrees, which makes it hard to predict the loop
termination accurately.


\subsection{Edgemap vs Vertexmap}
\begin{table}[t!]
	\centering
	\footnotesize
	\caption{Architectural events for \emph{vertexmap} and \emph{edgemap}:
          cache misses serviced from the local NUMA node, from the remote NUMA node (Rmt)
          and TLB misses. Numbers expressed as MPKI. }
	\begin{tabular}{|l|p{0.35cm}|p{0.7cm}|c|c|c||c|c|c|}
		\hline
		& \multirow{2}[2]{*}{App.} & \multirow{2}[2]{*}{Order} & \multicolumn{3}{c||}{Vertex Map} & \multicolumn{3}{c|}{Edge Map} \\
		\cline{4-9}          &       &       & Local & Rmt & TLB &  Local & Rmt & TLB   \\
		\hline
                \parbox[t]{2mm}{\multirow{4}{*}{\rotatebox[origin=c]{90}{Twitter}}}
		& \multirow{2}[4]{*}{PR} & Ori. & 4.5 & 4.1 & 0.02 	 & 11.1 & 9.3 & 8.3 \\
		\cline{3-9}          &       & VEBO  & 6.9 & 1.6 & 0.01 	 & 12.0 & 12.2 & 9.4\\
		\cline{2-9}          & \multirow{2}[4]{*}{BF} & Ori. & 2.5 & 2.0 & 0.03 	 & 9.1 &11.0 & 11.5 \\
		\cline{3-9}          &       & VEBO  & 3.6 & 0.5 & 0.01 	 & 8.9 & 10.6 & 11.2 \\
		\hline
                \parbox[t]{2mm}{\multirow{4}{*}{\rotatebox[origin=c]{90}{Friendster}}}
		& \multirow{2}[4]{*}{PR} & Ori. & 8.3 & 3.3 & 0.01 	 & 33.0 & 28.7 & 34.8 \\
		\cline{3-9}          &       & VEBO  & 9.0 & 2.2 & 0.008 	 & 21.4 & 19.3 & 10.1 \\
		\cline{2-9}          & \multirow{2}[4]{*}{BF} & Ori. & 6.0 & 1.5 & 0.02  & 27.0 & 20.5 & 23.6 \\
		\cline{3-9}          &       & VEBO  & 6.6 & 0.8& 0.01& 22.6 & 16.7 & 20.2 \\
		\hline
	\end{tabular}%
	\label{tab:cache_table}%
\end{table}%

Graph algorithms are expressed by
means of \emph{edgemap} and \emph{vertexmap} traversals.
VEBO simultaneously balances edges and vertices and so load balances
both edgemap and vertexmap.
The performance benefits are, however, different.
Table~\ref{tab:cache_table} shows the summary statistics across
the edgemap and vertexmap operations for Twitter and Friendster with PageRank (PR) and Bellman-Ford (BF).
These statistics are collected per thread and correspond to the
execution of 8 consecutive partitions.
Edgemap generally dominates the execution time, so these statistics
correspond closely to Figure~\ref{fig:numatimer}.
Local and remote cache misses
as well as TLB misses are significantly reduced,
except of PR for Twitter.
VEBO generally improves memory locality during
edgemap, even though this was not
part of the optimization criterion.


Vertexmap benefits from load balancing rather than locality.
GraphGrind spreads the iterations of the vertexmap loop equally
across all threads~\cite{sun:17:ics}.
Arrays accessed by vertexmap, however, are distributed over the NUMA nodes
according to the graph partitions.
This causes a high number of remote cache misses because
Algorithm~\ref{algo:part} induces imbalance in the number
of vertices per partition.
VEBO ensures that all partitions have an equal number
of vertices. As such, each thread mostly accesses NUMA-local data,
which explains the reduction in remote misses.
\subsection{Space Filling Curves}
\label{sec:sfc}
The dense frontiers of GraphGrind use the COO, which could be the same as if using CSR or CSC. Alternatively, ordering edges using the Hilbert space filling curve gives a significant performance boost~\cite{sun:17:icpp}.
However Hilbert order is a heuristic, which has been studied
mostly for dense matrix algebra~\cite{ding:99:sfc,mellor-crummey:01:sfc}.
We have found that 
(i)~there exist cases where Hilbert order degrades performance;
(ii)~the effectiveness of Hilbert order depends on the number of non-zeroes.
To the best of our knowledge,
neither of these properties are adequately covered in the literature.

We sort all vertices from high to low degree and
partition the resulting graph into 384 partitions using Algorithm~\ref{algo:part}.
We compare the performance of this high-to-low order against VEBO for PageRank (Figure~\ref{fig:Hilbert-1-384}).
The first partitions contain the vertices with the highest degrees, which are processed faster than a partition with a mix degrees of VEBO.
The last partitions contain exclusively
degree-one vertices and are processed up to three times slower
than VEBO. 
This demonstrates that Hilbert order is more effective when the
in-degree of vertices is high. High degrees imply more opportunity
for reuse of data, which admit memory access order optimization.



Next,
we compare Hilbert order to the
traversal order of CSR,
i.e., by increasing source vertex ID (Figure~\ref{fig:Hilbert-csr}).
Surprisingly,
the CSR order admits faster processing for partitions 0--350 (approximately).
Thus, for high-degree vertices the CSR order is more efficient than Hilbert order.

As VEBO creates nearly the same degree distribution in each partition,
it is expected that CSR order is more efficient than Hilbert order.
We have modified GraphGrind accordingly to change COO using CSR order.
This change consistently
speeds up the algorithms with dense frontiers.
\begin{figure}[t]
	\centering	
    \vspace*{-5mm}
	\subfloat[High-to-low and VEBO]{
		\begin{minipage}{.48\columnwidth}
			\includegraphics[width=\columnwidth,clip]{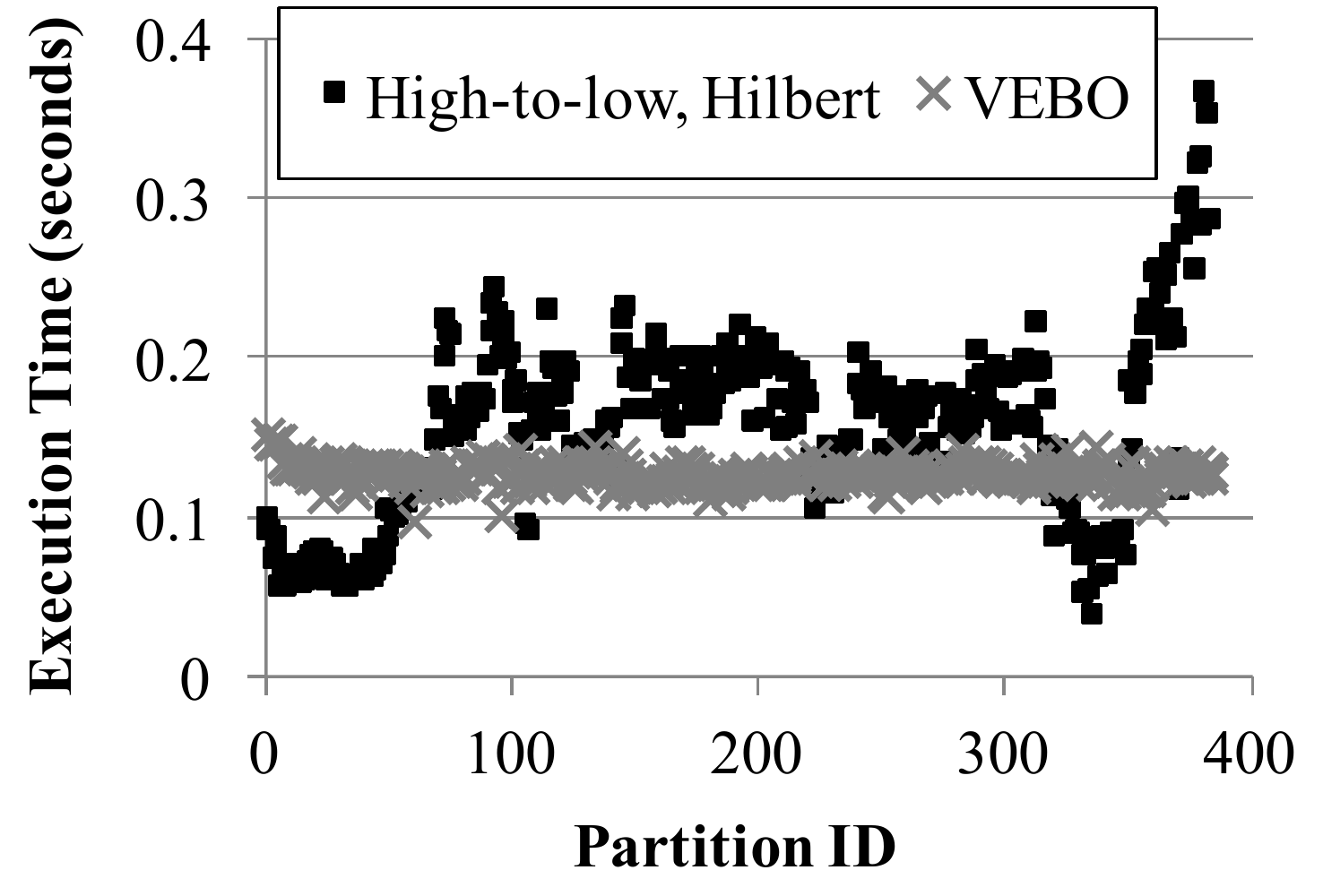}
		\end{minipage}
		\label{fig:Hilbert-1-384}
	}
	\subfloat[Hilbert and CSR]{
		\begin{minipage}{.48\columnwidth}
			\includegraphics[width=\columnwidth,clip]{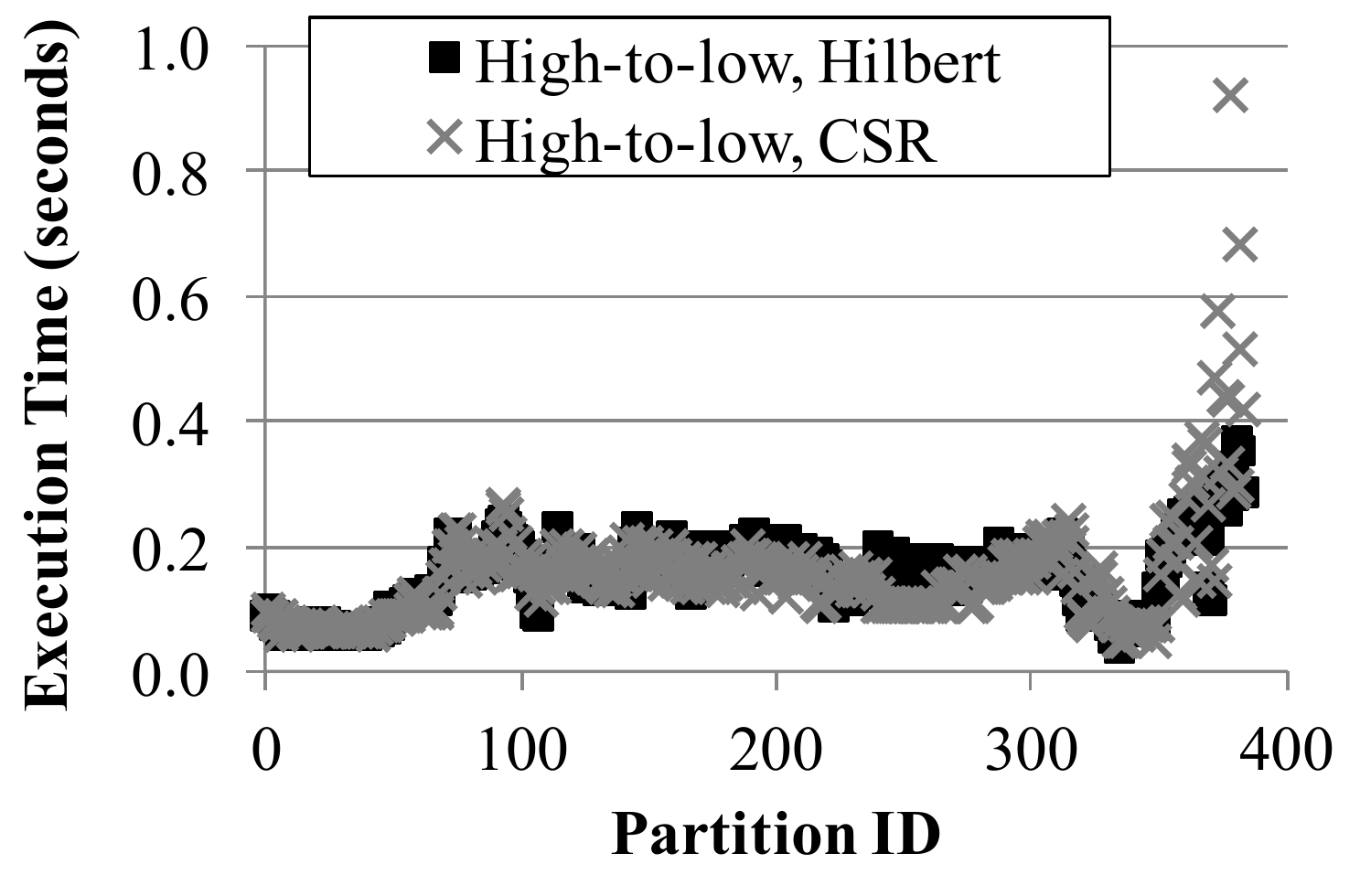}
		\end{minipage}
		\label{fig:Hilbert-csr}
	}
	\caption{Processing speed as a function of the in-degree.
		First iteration of PR on Twitter shown.}
\end{figure}
\begin{table*}[t!]
	\centering
	\footnotesize
	\caption{Overhead in seconds of vertex reordering, edge reordering and partitioning for Twitter and Friendster compared to PR and BFS execution time. Fastest results in bold.}
	\begin{tabular}{|c||c|c|c||c|c||c|c||c|c|}
		\hline
		\multirow{2}[4]{*}{Graph}& \multicolumn{3}{c||}{Vertex reordering} & \multicolumn{2}{c||}{Edge reordering + partitioning} & \multicolumn{2}{c||}{BFS} & \multicolumn{2}{c|}{PR (50 iterations)}  \\
		\cline{2-10}      &    RCM   & Gorder &VEBO & Hilbert order & CSR order & Original & VEBO & Original & VEBO \\
		\hline
		\hline
		Twitter   & 519.558 & 7803.871 & \textbf{5.119} & 10.684 & \textbf{4.412 }& 0.245 & \textbf{0.210} & 59.122 & \textbf{48.465}\\
		\hline
		Friendster & 755.384 & 8930.228 & \textbf{11.164} & 13.482 & \textbf{5.441} & 0.931 & \textbf{0.513} & 147.220 & \textbf{76.532} \\
		\hline
	\end{tabular}%
	\label{tab:overhead}%
\end{table*}%

\subsection{Overhead of Vertex Ordering}
Table~\ref{tab:overhead} shows that VEBO reduces the cost of vertex ordering up to 101x over RCM and 1524x over Gorder.
Moreover, VEBO performs best with CSR order for COO (Table~\ref{tab:order}).
This way of edge reordering and partitioning is faster than
using Hilbert edge order:
from 10.7s down to 4.4s for Twitter and 13.5s to 5.4s for Friendster.
While graph preparation incurs some overhead, the overhead is far less than
the gain. E.g., PR for Friendster takes 13.5s+147.2s on GraphGrind,
while it takes 11.2s+5.4s+76.5s on GraphGrind with VEBO (PR typically requires
over 50 iterations to converge).
BFS traverses graphs from a single vertex, and has a significantly shorter
execution time. However, in any reasonable setup multiple graph analytics
would be performed each time a graph is loaded in memory
such that the overhead of vertex and edge reordering can be amortized.


\section{Related work}
\label{sec:rela}
Graph partitioning has been thoroughly investigated.
Graph partitioning problem is formulated
as calculating a subset of the edges (or vertices) such that
the number of edges crossing partitions is minimized~\cite{stanton2012streaming,zhang2015numa,andreev:06:bgp,feige:02:polylog,tsourakakis:14:fennel}.
Additionally, authors specify a constraint to balance the edges~\cite{karypis:98:multilevel}
or vertices~\cite{bourse:14:bep,gonzalez2012powergraph}.


The exact solution to the graph partitioning problem
is NP complete (e.g.,~\cite{feige:02:polylog}).
Many authors have considered approximate algorithms
to achieve a close-to-optimal solution in polynomial time~\cite{andreev:06:bgp,feige:02:polylog}.
These may produce partitions of similar quality as general-purpose
graph partitioners such as METIS~\cite{karypis:98:metis}
in less time~\cite{feige:02:polylog}.

One may partition the edge set or the vertex set.
Partitioning the vertex set leads to a problem
of minimizing the edge cut, potentially under a constraint of
edge balance~\cite{kyrola2012graphchi,roy2013x,zhang2015numa}.
Partitioning the edge set leads to better heuristics and higher-performing
implementations~\cite{gonzalez2012powergraph}.
Edges now belong to a partition, while vertices may be replicated.
In this problem, rather than minimizing the edge cut,
the optimization criterion is to minimize the amount of vertex replication,
also known as vertex cut.

Different graph partitioning approaches are used in distributed memory
systems vs shared memory systems. Communication cost dominates
distributed memory systems, hence edge cut and vertex replication
are minimized~\cite{gonzalez2012powergraph,chen:15:powerlyra}.
In shared memory systems, the best performing systems ensure that
each partition contains vertices with consecutive vertex IDs.
This simplifies indexing and improves memory locality.
As such, partitioners such as METIS are not immediately applicable
and additional vertex relabeling must be applied.

Streaming partitioning algorithms partition the graph in a single pass
using a limited amount of storage~\cite{stanton2012streaming,tsourakakis:14:fennel}.
These algorithms compute approximations to the optimal partition
of similar quality to METIS~\cite{karypis:98:metis} in
a fraction of the time~\cite{tsourakakis:14:fennel}.
Gonzalez~\emph{et al} proposed \emph{vertex cut},
a parallel streaming partitioning algorithm that
minimizes vertex replication~\cite{gonzalez2012powergraph}.
Li~\emph{et al}~\cite{li:17:spac} and Bourse~\emph{et al}~\cite{bourse:14:bep} proposed efficient edge-balanced partitioning methods.
Bourse~\emph{et al}~\cite{bourse:14:bep} moreover investigate the interplay between edge balance and vertex balance, which is non-trivial if edge cuts are simultaneously minimized.

PowerLyra~\cite{chen:15:powerlyra} differentiates "high-degree" vertices from "low-degree" vertices and applies different partitioning methods. It aims to minimize the replication factor. VEBO is different. It explicitly avoids minimizing replication factor and edge cut as this is computationally demanding. It is likely that VEBO can further improve PowerLyra because it is easier to minimize the edge cut when the high-degree vertices are processed first.

Graphs may be stored in multiple, equivalent ways.
Vertex reordering aims to exploit the degree of freedom in vertex IDs.
\textbf{Gorder}~\cite{wei:16:gorder} proposes a general vertex ordering approach to improve CPU cache utilization. They find an optimal permutation among all nodes of a graph that retains temporal locality for nodes that are frequently accessed together.
The \textbf{RCM} algorithm reduces the bandwidth of sparse matrices by relabeling vertices~\cite{george:94:sparse}. 
\textbf{SlashBurn}~\cite{lim2014slashburn} exploits the hubs and their neighbours to define an alternative community different from the traditional community.
\textbf{LDG}~\cite{stanton2012streaming} is a heuristic streaming partitioner for large distributed graphs.

Edge reordering changes the order of edge traversal.
Switching between CSR and CSC~\cite{beamer:12:bfs}
is an edge reordering optimization.
Space-filling curves tend to increase temporal locality~\cite{murray:13:naiad}.
Extensive partitioning (a.k.a.\ segmentation~\cite{zhang:16:cache})
of CSR and CSC representations also improves
temporal locality~\cite{sun:17:icpp}.
The interaction between vertex and edge reordering is not covered well in the
literature.

\section{Conclusion}
\label{sec:concl}
The established heuristic to balance the processing time of graph partitions
is to create edge-balanced partitions.
We have demonstrated that edge-balance alone does
not create good load balance and that considering vertex-balance along
with edge-balance improves load balance significantly.
Moreover, our results show that minimizing edge cut or vertex replication
is not necessary on shared memory systems, and by extension on
shared memory subsystems in distributed graph processing.
We design VEBO, a vertex reordering algorithm for joint vertex and
edge balancing and demonstrate that it achieves excellent load balance.
for graphs with a power-law degree distribution.

We experimentally evaluated the performance of VEBO on three shared-memory
graph processing systems: Ligra, Polymer and GraphGrind.
Graph processing systems using static scheduling, such as Polymer and
GraphGrind, benefit more strongly from VEBO.

In future work, we will investigate whether distributed
graph processing systems, which typically use static scheduling,
also benefit from increased load balance even if this comes at the expense
of a small increase in vertex replication, and thus an increase in the
volume of data communication.
%
While VEBO improves load balance,
there still remain unknown factors
that affect the processing time of a graph partition.
Identifying those factors may lead to still higher efficiency.



	
        
	\bibliographystyle{IEEEtran}
	
	\bibliography{references}

\begin{thebibliography}{10}
\providecommand{\url}[1]{#1}
\csname url@samestyle\endcsname
\providecommand{\newblock}{\relax}
\providecommand{\bibinfo}[2]{#2}
\providecommand{\BIBentrySTDinterwordspacing}{\spaceskip=0pt\relax}
\providecommand{\BIBentryALTinterwordstretchfactor}{4}
\providecommand{\BIBentryALTinterwordspacing}{\spaceskip=\fontdimen2\font plus
\BIBentryALTinterwordstretchfactor\fontdimen3\font minus
  \fontdimen4\font\relax}
\providecommand{\BIBforeignlanguage}[2]{{%
\expandafter\ifx\csname l@#1\endcsname\relax
\typeout{** WARNING: IEEEtran.bst: No hyphenation pattern has been}%
\typeout{** loaded for the language `#1'. Using the pattern for}%
\typeout{** the default language instead.}%
\else
\language=\csname l@#1\endcsname
\fi
#2}}
\providecommand{\BIBdecl}{\relax}
\BIBdecl

\bibitem{gonzalez2012powergraph}
J.~E. Gonzalez, Y.~Low, H.~Gu, D.~Bickson, and C.~Guestrin, ``Powergraph:
  Distributed graph-parallel computation on natural graphs,'' in \emph{OSDI},
  2012.

\bibitem{kyrola2012graphchi}
A.~Kyrola, G.~E. Blelloch, and C.~Guestrin, ``Graph{Chi}: Large-scale graph
  computation on just a {PC}.'' in \emph{OSDI}, 2012, pp. 31--46.

\bibitem{roy2013x}
A.~Roy, I.~Mihailovic, and W.~Zwaenepoel, ``X-stream: Edge-centric graph
  processing using streaming partitions,'' in \emph{Proc. of the ACM Symp. on
  Operating Systems Principles}, 2013, pp. 472--488.

\bibitem{zhang2015numa}
K.~Zhang, R.~Chen, and H.~Chen, ``{NUMA}-aware graph-structured analytics,'' in
  \emph{Proc. of ACM Symp.on Principles and Practice of Parallel Programming},
  2015, pp. 183--193.

\bibitem{sun:17:ics}
J.~Sun, H.~Vandierendonck, and D.~S. Nikolopoulos, ``Graph{G}rind: Addressing
  load imbalance of graph partitioning,'' in \emph{Proc. of ACM Intl.
  Conference on Supercomputing}, 2017, pp. 16:1--16:10.

\bibitem{bourse:14:bep}
F.~Bourse, M.~Lelarge, and M.~Vojnovic, ``Balanced graph edge partition,'' in
  \emph{Proc. of the ACM SIGKDD Intl.l Conf. on Knowledge Discovery and Data
  Mining}, 2014, pp. 1456--1465.

\bibitem{li:17:spac}
R.~Li, L.and~Geda, A.~B. Hayes, Y.~Chen, E.~Z. Chaudhari, P.and~Zhang, and
  M.~Szegedy, ``A simple yet effective balanced edge partition model for
  parallel computing,'' \emph{Proc. ACM Meas. Anal. Comput. Syst.}, pp.
  14:1--14:21, 2017.

\bibitem{Ligra}
J.~Shun and G.~E. Blelloch, ``Ligra: A lightweight graph processing framework
  for shared memory,'' in \emph{Proc of ACM Symp. on Principles and Practice of
  Parallel Programming}, 2013, pp. 135--146.

\bibitem{george:94:sparse}
A.~George, J.~Liu, and E.~Ng, ``Computer solution of sparse linear systems,''
  1994.

\bibitem{wei:16:gorder}
H.~Wei, J.~X. Yu, C.~Lu, and X.~Lin, ``Speedup graph processing by graph
  ordering,'' in \emph{Proc. of International Conference on Management of
  Data}, ser. SIGMOD '16.\hskip 1em plus 0.5em minus 0.4em\relax ACM, 2016, pp.
  1813--1828.

\bibitem{murray:13:naiad}
D.~G. Murray, F.~McSherry, R.~Isaacs, M.~Isard, P.~Barham, and M.~Abadi,
  ``Naiad: A timely dataflow system,'' in \emph{Proc. of the ACM Symp. on
  Operating Systems Principles}, 2013, pp. 439--455.

\bibitem{sun:17:icpp}
J.~Sun, H.~Vandierendonck, and D.~S. Nikolopoulos, ``Accelerating graph
  analytics by utilising the memory locality of graph partitioning,'' in
  \emph{Proc of ICPP}, 2017, pp. 181--190.

\bibitem{graham:69:mp}
R.~L. Graham, ``Bounds on multiprocessing timing anomalies,'' \emph{{SIAM}
  Journal on Applied Mathematics}, pp. 416--429, 1969.

\bibitem{kwak2010twitter}
H.~Kwak, C.~Lee, H.~Park, and S.~Moon, ``What is twitter, a social network or a
  news media?'' in \emph{Proc. of the ACM intl. conference on World wide web},
  2010, pp. 591--600.

\bibitem{SNAP}
\BIBentryALTinterwordspacing
Stanford. (2009) Stanford large network dataset collection. [Online].
  Available: \url{https://snap.stanford.edu/data/}
\BIBentrySTDinterwordspacing

\bibitem{page:99:pagerank}
L.~Page, S.~Brin, R.~Motwani, and T.~Winograd, ``The pagerank citation ranking:
  Bringing order to the web.'' Stanford InfoLab, Technical Report, 1999.

\bibitem{beamer:12:bfs}
S.~Beamer, K.~Asanovi\'{c}, and D.~Patterson, ``Direction-optimizing
  breadth-first search,'' in \emph{Proc. of the Intl. Conference on High
  Performance Computing, Networking, Storage and Analysis}, 2012, pp.
  12:1--12:10.

\bibitem{frigo:98:cilk5}
M.~Frigo, C.~E. Leiserson, and K.~H. Randall, ``The implementation of the
  {Cilk-5} multithreaded language,'' in \emph{Proc.\ of the ACM SIGPLAN conf.\
  on Programming Language Design and Implementation}, 1998, pp. 212--223.

\bibitem{chen2015optimal}
Z.~Chen, Y.~Liu, R.~C. Wong, J.~Xiong, G.~Mai, and C.~Long, ``Optimal location
  queries in road networks,'' \emph{ACM Transactions on Database Systems
  (TODS)}, p.~17, 2015.

\bibitem{low:12:graphlab}
Y.~Low, D.~Bickson, J.~Gonzalez, C.~Guestrin, A.~Kyrola, and J.~M. Hellerstein,
  ``Distributed graphlab: A framework for machine learning and data mining in
  the cloud,'' \emph{Proc. VLDB Endow.}, vol.~5, no.~8, pp. 716--727, Apr.
  2012.

\bibitem{mcsherry:05:pagerank}
\BIBentryALTinterwordspacing
F.~McSherry, ``A uniform approach to accelerated pagerank computation,'' in
  \emph{Proceedings of the 14th International Conference on World Wide Web},
  ser. WWW '05.\hskip 1em plus 0.5em minus 0.4em\relax New York, NY, USA: ACM,
  2005, pp. 575--582. [Online]. Available:
  \url{http://doi.acm.org/10.1145/1060745.1060829}
\BIBentrySTDinterwordspacing

\bibitem{ding:99:sfc}
C.~Ding and K.~Kennedy, ``Improving cache performance in dynamic applications
  through data and computation reorganization at run time,'' in \emph{Proc.\ of
  the ACM SIGPLAN Conf.\ on Programming Language Design and Implementation},
  1999, pp. 229--241.

\bibitem{mellor-crummey:01:sfc}
J.~Mellor-Crummey, D.~Whalley, and K.~Kennedy, ``Improving memory hierarchy
  performance for irregular applications using data and computation
  reorderings,'' \emph{Intl. Journal of Parallel Programming}, pp. 217--247,
  2001.

\bibitem{stanton2012streaming}
I.~Stanton and G.~Kliot, ``Streaming graph partitioning for large distributed
  graphs,'' in \emph{Proc. of the ACM SIGKDD intl.l conference on Knowledge
  discovery and data mining}, 2012, pp. 1222--1230.

\bibitem{andreev:06:bgp}
K.~Andreev and H.~Racke, ``Balanced graph partitioning,'' \emph{Theory of
  Computing Systems}, pp. 929--939, 2006.

\bibitem{feige:02:polylog}
U.~Feige and R.~Krauthgamer, ``A polylogarithmic approximation of the minimum
  bisection,'' \emph{SIAM Journal on Computing}, pp. 1090--1118, 2002.

\bibitem{tsourakakis:14:fennel}
C.~Tsourakakis, B.~Gkantsidis, C.and~Radunovic, and M.~Vojnovic, ``Fennel:
  Streaming graph partitioning for massive scale graphs,'' in \emph{Proc. of
  ACM Intl. Conf. on WSDM}, 2014, pp. 333--342.

\bibitem{karypis:98:multilevel}
G.~Karypis and V.~Kumar, ``Multilevel algorithms for multi-constraint graph
  partitioning,'' in \emph{Proc. of ACM/IEEE conference on Supercomputing},
  1998, pp. 1--13.

\bibitem{karypis:98:metis}
------, ``Multilevel k-way partitioning scheme for irregular graphs,''
  \emph{Journal of Parallel and Distributed computing}, vol.~48, no.~1, pp.
  96--129, 1998.

\bibitem{chen:15:powerlyra}
R.~Chen, J.~Shi, Y.~Chen, and H.~Chen, ``{PowerLyra}: Differentiated graph
  computation and partitioning on skewed graphs,'' in \emph{Proceedings of the
  Tenth European Conference on Computer Systems}, ser. EuroSys '15.\hskip 1em
  plus 0.5em minus 0.4em\relax New York, NY, USA: ACM, 2015.

\bibitem{lim2014slashburn}
Y.~Lim, U.~Kang, and C.~Faloutsos, ``Slashburn: Graph compression and mining
  beyond caveman communities,'' \emph{IEEE Trans. on Knowledge and Data
  Engineering}, pp. 3077--3089, 2014.

\bibitem{zhang:16:cache}
Y.~Zhang, V.~Kiriansky, C.~Mendis, M.~Zaharia, and S.~Amarasinghe, ``Optimizing
  cache performance for graph analytics,'' \emph{arXiv preprint
  arXiv:1608.01362}, 2016.

\end{thebibliography}






\appendix

\section{Artifact Description Appendix: VEBO: A Vertex- and Edge-Balanced Ordering Heuristic to Load Balance Parallel Graph Processing}

\subsection{Abstract}

This description contains the information needed to launch some experiments of the paper "VEBO: A Vertex- and Edge-Balanced Ordering Heuristic to Load Balance Parallel Graph Processing". We explain how to compile and run the modified VEBO, GOrder and RCM in Ligra, Polymer and GraphGrind examples used in section IV. The results from section V are produced using NUMA-aware clang compiler, but this artifact description is not focused on that part of the paper.
\subsection{Description}

\subsubsection{Check-list (artifact meta information)}


{\small
\begin{itemize}
  \item {\bf Algorithm}: Graph ordering algorithm VEBO
  \item {\bf Program}: C++ code with Cilkplus extension
  \item {\bf Compilation}: icpc 14.0.0 and clang 3.4.1.
  \item {\bf Data set}: Publicly available graph files in adjacency format.
  \item {\bf Run-time environment}: Linux version 3.10.0-229.4.2.el7.x86\_64
  \item {\bf Hardware}: An x86-64 NUMA system.
  \item {\bf Output}: VEBO generates reordered graphs with adjacency format and timing measurements (wall clock time) on data loading time and reordering time.
  \item {\bf Experimental workflow}: Graph data sets are reordered with VEBO prior to loading in public graph processing frameworks (Ligra, Polymer and GraphGrind).
  \item {\bf Publicly available?}: Yes, after publication of paper
\end{itemize}
}

\subsubsection{How software can be obtained (if available)}
VEBO will be shared under open source license upon acceptance of the paper.

\subsubsection{Hardware dependencies}
We use a 4-socket 2.6GHz Intel Xeon E7-4860 machine with 256GB of DRAM
in our experiments.

\subsubsection{Software dependencies}
VEBO is a stand-alone tool.
Experiments make use of Ligra (\url{https://github.com/jshun/ligra}),
Polymer (\url{http://ipads.se.sjtu.edu.cn:1312/opensource/polymer.git}), and
GraphGrind (\url{https://github.com/Jaiwen/GraphGrind}).
We compare against item GOrder and RCM
(\url{https://github.com/datourat/Gorder}).

Ligra requires Cilkplus;
GraphGrind requires a custom version of Cilkplus with NUMA extension.
We have used the customized clang (\url{https://github.com/hvdieren/swan_clang}),
LLVM (\url{https://github.com/hvdieren/swan_llvm}),
and Cilkplus runtime (\url{https://github.com/hvdieren/swan_runtime}).

\subsubsection{Datasets}
\begin{itemize}
	\item Friendster, Orkut, LiveJournal are from Stanford Network Analysis Platform (SNAP). (\url{http://snap.stanford.edu/snap/})
	\item Powerlaw graph is generated by snap-standford graph generator.(\url{https://github.com/snap-stanford/snap/tree/master/examples/ graphgen})
	\item RMAT27 graph is generated by Problem Based Benchmark Suite.(\url{http://www.cs.cmu.edu/~pbbs/})
	\item Twitter is a social network graph from "What is Twitter, a social network or a news media?"~\cite{kwak2010twitter}.
	\item Yahoo\_mem is from Yahoo! Inc. (\url{http://webscope.sandbox.yahoo.com})
	\item USAroad is from 9th DIMACS Implementation Challenge - Shortest Paths. (\url{http://www.dis.uniroma1.it/challenge9/data/USA-road-d/USA-road-d.USA.gr.gz})
\end{itemize}
\subsection{Installation}
Download VEBO and compiler using Cilkplus compiler, e.g.,
\begin{itemize}
	\item clang++ 4.9.2 or higher with support for Cilkplus.
	\item Intel icpc compiler 
\end{itemize}
Compiling VEBO:
icpc -O3  -fcilkplus -g -c rMatGraph.C -o rMatGraph.o


\subsection{Experiment workflow}
After downloading the package from XXXXX, install it using the above instruction. For reordering, run the tool using the following command:
./VEBO -r 100 -p 384 original vebo
Where:
\begin{itemize}
	\item r: start vertex to track in the new reordering graph
	\item p: number of graph partitions.
	\item original: file containing the original graph (input)
	\item vebo: file where reordered graph is stored
\end{itemize}
Output: A reordered graph using VEBO that is isomorphic to the original graph.

\subsection{Evaluation and expected result}
The expected result is that the partitions of the reordered graph have a balanced number of vertices (unique destinations) and edges in each parition, i.e., in each 1/384-th set of vertices.
It is expected that the reordered graph will be processed faster than the original graph with Polymer and GraphGrind when the graph is scale-free.

\subsection{Experiment customization}
There is no need for customization to produce the results in this paper.

\subsection{Notes}
None.


\end{document}